\documentclass[]{article}

\usepackage{authblk}
\usepackage{cite}
\usepackage{amsmath,amssymb,amsfonts,amsthm}
\usepackage{braket}
\usepackage{algorithm}
\usepackage[indLines=false,commentColor=teal]{algpseudocodex}
\usepackage{subcaption}
\usepackage{graphicx}
\usepackage{textcomp}
\usepackage{xcolor}
\usepackage{enumitem}
\usepackage{booktabs}
\usepackage[hidelinks]{hyperref}
\usepackage[margin=0.85in]{geometry}

\theoremstyle{plain}
\newtheorem{theorem}{Theorem}
\newtheorem{lemma}[theorem]{Lemma}

\newtheorem{corollary}[theorem]{Corollary}

\theoremstyle{definition}
\newtheorem{definition}{Definition}
\newtheorem{problem}{Problem}

\theoremstyle{remark}

\DeclareMathOperator{\wt}{wt}
\DeclareMathOperator{\supp}{supp}
\DeclareMathOperator{\syn}{syn}
\newcommand{\gap}{\mathsf{gap}}
\newcommand{\LER}{\mathrm{LER}}

\makeatletter %
\newcommand{\linebreakand}{%
\end{@IEEEauthorhalign}
\hfill\mbox{}\par
\mbox{}\hfill
\begin{@IEEEauthorhalign}
}
\makeatother %

\begin{document}

\title{Staying Sober: A Shot-Free Predictor of Logical Error Rates for Concatenated Quantum Error Correcting Codes}

\author[1]{Sayam Sethi\footnote{\href{mailto:sayams@utexas.edu}{sayams@utexas.edu}}}
\author[1]{Aditi Awasthi}
\author[1]{Maxwell Poster}
\author[2]{Joshua Viszlai}
\author[1]{Jonathan Mark Baker}
\affil[1]{Electrical and Computer Engineering\\The University of Texas at Austin}
\affil[2]{Department of Computer Science\\University of Chicago}

% \author{
%   \IEEEauthorblockN{Sayam Sethi\IEEEauthorrefmark{1}}
%   \IEEEauthorblockA{Electrical and Computer Engineering\\The
%   University of Texas at Austin}
%   \and
%   \IEEEauthorblockN{Aditi Awasthi}
%   \IEEEauthorblockA{Electrical and Computer Engineering\\The
%   University of Texas at Austin}
%   \and
%   \IEEEauthorblockN{Maxwell Poster}
%   \IEEEauthorblockA{Electrical and Computer Engineering\\The
%   University of Texas at Austin}
%   \and
%   \linebreakand{}
%   \IEEEauthorblockN{Joshua Viszlai}
%   \IEEEauthorblockA{Department of Computer Science\\University of Chicago}
%   \and
%   \IEEEauthorblockN{Jonathan Mark Baker}
%   \IEEEauthorblockA{Electrical and Computer Engineering\\The
%   University of Texas at Austin}
%   \linebreakand{}
% }

\maketitle

\begin{abstract}
% Estimating the Logical Error Rate (LER) of Quantum Error Correcting (QEC) codes is an important step in the design of Fault-Tolerant Quantum Computing (FTQC) systems. However, the problem of estimating LERs by Monte Carlo (MC) sampling scales in the number of samples as $O(p^{-d/2}/\varepsilon^2)$ in the physical error rate $p$, the code distance $d$ of the QEC code, and the statistical error of the sampling $\varepsilon$. As we discover newer QEC codes at higher distances with interesting properties, and as device error rates improve, the number of MC samples needed for a reliable estimate of LER becomes intractable. This problem is further exacerbated with the increased decoding time required per sample for codes with larger distances, and higher encoding rates. Prior work has addressed this issue, however their approach works for hard decoding and therefore does not extend to concatenated codes, which require soft-information decoding. We propose a novel LER prediction scheme, which takes in as input the stabilizer matrix $\mathcal{S}$ and the set of minimum weight logicals $\mathcal{L}$ of the QEC code, and outputs the leading order monomial of the LER in time that is $O(\text{poly}(d, |\mathcal{S}|, |\mathcal{L}|))$ under heuristic budgets. We prove the optimality of our approach under unconstrained budgets. Our scheme works for both code capacity and phenomenological noise, and supports a depolarizing noise models with arbitrary biases.
Estimating the logical error rate (LER) of a quantum error-correcting (QEC) code by Monte Carlo sampling takes $O(p^{-\lceil d/2 \rceil}/\varepsilon^2)$ samples at physical error rate $p$, code distance $d$ and relative error $\varepsilon$, which becomes intractable at large distances and low error rates, and especially for concatenated codes, whose distance grows as $d^{\ell}$ with the number of levels $\ell$. The fail fast method~\cite{beverland_fail_2025} reduces this cost by sampling the failures of each weight separately, but it only works for unconcatenated codes, and does not model the soft information that the levels of a concatenated code pass to each other. We propose a predictor that takes the stabilizers, the minimum-weight logical operators and, for odd $d$, the logical operators of weight $d + 1$ of a code, and computes, without sampling, the leading monomial $c\,p^{o}$ of the LER when expressed as a polynomial in $p$, at every level of concatenation, in time polynomial in the code parameters and linear in $\ell$ for fixed budgets. We prove that our prediction is exact for any unconcatenated stabilizer code, for every logical observable and arbitrary per-qubit error rates, whenever no budget is exhausted. Furthermore, for concatenated codes, we construct a recursive approach that computes the leading monomial exactly in exponential time, and our predictor approximates this approach. Our predictions match exact enumeration at level 1 on every code we evaluate, and up to three levels for the iceberg code. Using our predictor, we evaluate and compare $37$ code configurations under four noise biases in $48$ core-hours, where Monte Carlo sampling would need more than $10^{12}$ shots at $p = 10^{-3}$ for $113$ of these $148$ cases.
\end{abstract}

\section{Introduction}

\begin{figure}
    \centering
    \begin{subfigure}[t]{0.3\linewidth}
    \centering
    \includegraphics[width=\linewidth]{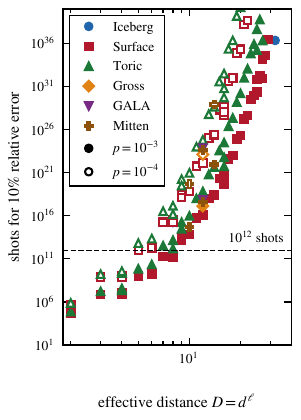}
    \caption{Monte Carlo shots for $10\%$ error.}
    \label{fig:cost}
    \end{subfigure}
    \begin{subfigure}[t]{0.69\linewidth}
        \centering
        \includegraphics[width=\linewidth]{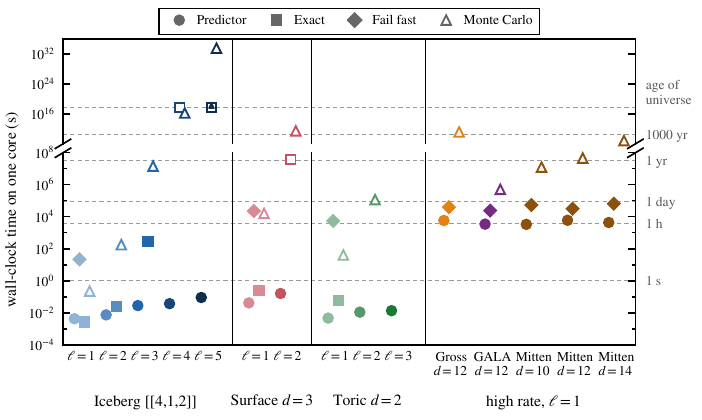}
        \caption{Wall-clock time of each method on one core.}
        \label{fig:runtime}
    \end{subfigure}
    \caption{Cost of estimating the LER under code-capacity noise. (a) Shots that Monte Carlo sampling needs to estimate the LER $L$ of the first logical qubit to a relative error of $10\%$, $(1 - L)/(0.01\,L)$, with $L$ from our prediction, at $p = 10^{-3}$ (filled) and $p = 10^{-4}$ (open). The dashed line marks $10^{12}$ shots, beyond which Monte Carlo sampling becomes impractical. (b) Wall-clock time of each method at $p = 10^{-3}$, where filled markers are measured, open markers are projected, and the arrow marks a lower bound. Our predictor gives the LER at every $p$ from one run, and so does the spectrum that fail fast samples, but at a precision that varies with $p$, while exact enumeration and Monte Carlo sampling must be rerun at each $p$. For the high-rate codes, Monte Carlo sampling and fail fast decode with BP-OSD, and every predictor time includes finding the minimum-weight logical operators.}
\end{figure}

Quantum computers promise speedups on scientifically important problems, including factoring, optimization, and quantum simulation. Their qubits are highly susceptible to errors, so quantum error correction (QEC) must protect the computation, by encoding information from many noisy physical qubits into fewer, more reliable logical qubits. A program then succeeds only if the logical error rate (LER) of the chosen code is low enough, often below $10^{-12}$.

Recent works have proposed many candidate codes for fault-tolerant architectures, from spatially-efficient quantum low-density parity-check (qLDPC) codes such as the gross code~\cite{bravyi_high-threshold_2024}, to concatenated codes. Concatenated codes in particular have seen rising interest lately, from constant-overhead fault tolerance and high-rate code families to compilers, architectures and experiments~\cite{yamasaki_time-efficient_2024, goto_many-hypercube_2024, yoshida_concatenate_2025, litinski_blocklet_2025, liu_coniq_2025, putterman_hardware-efficient_2025, sethi_exploring_2026}. Concatenating a code of distance $d$ with itself over $\ell$ levels~\cite{knill_concatenated_1996} gives a code of distance at least $d^{\ell}$, so the LER of the concatenated code falls quickly with each level.

Choosing a code therefore requires comparing the LERs of many candidates, which differ in their family, distance, number of logical qubits and number of levels, each under several noise models. Monte Carlo sampling estimates the LER by decoding sampled error configurations. Since the LER scales as $p^{\lceil d/2 \rceil}$ at physical error rate $p$, Monte Carlo sampling needs $O(p^{-\lceil d/2 \rceil}/\varepsilon^2)$ samples for a relative error $\varepsilon$. For a concatenated code, $d$ becomes $d^{\ell}$, so the number of samples grows exponentially in $d^{\ell}$. Figure~\ref{fig:cost} shows that $62$ of the $80$ codes we evaluate need more than $10^{12}$ samples at $p = 10^{-3}$, and Figure~\ref{fig:runtime} shows projected Monte Carlo run times beyond the age of the universe. Moreover, each Monte Carlo run gives the LER at a single $p$, so every new $p$ needs a new run.

Prior approaches either still sample or do not apply to these codes: 1. rare-event methods such as splitting~\cite{bravyi_simulation_2013} and fail fast~\cite{beverland_fail_2025} need fewer samples but still decode every sample, 2. analytic methods give closed forms only for specific codes~\cite{fowler_analytic_2012, regev_closed_2026, forlivesi_logical_2023}, or, through weight enumerators~\cite{forlivesi_performance_2023, cao_quantum_2023}, the LER of bounded-distance decoding or the probability of an undetected logical error, rather than the LER of the maximum-likelihood (ML) decoder, 3. analytic methods for concatenated codes assume hard decoding or stop at two levels~\cite{rahn_exact_2001, fern_correctable_2008, zhang_bidirectional_2026}, and 4. soft decoding approaches for concatenated codes~\cite{poulin_optimal_2006, evans_message_2008, gidney_yoked_2023, meister_efficient_2024} estimate the LER by sampling. Sampling also needs an efficient decoder, which for qLDPC codes is typically BP-OSD~\cite{roffe_decoding_2020}. BP-OSD fails on some error configurations of weight below $d/2$, so fail fast with BP-OSD estimates LERs up to $2.5 \times 10^{9}$ times higher than our prediction at $p = 10^{-3}$ (Figures~\ref{fig:failfast} and~\ref{fig:failfast-mitten}). Other works rank codes by proxies such as an effective distance or the number of minimum-weight logical operators~\cite{dua_clifford-deformed_2024, perlin_quantum_2026}, which do not give the LER.

In this work, we propose a predictor that, to our knowledge, is the first to compute, without sampling, the LER of the soft tree decoder of concatenated codes beyond two levels. It takes the stabilizers, the minimum-weight logical operators and, for odd $d$, the logical operators of weight $d + 1$ of a code, and computes the leading term $c\,p^{o}$ of the LER at every level, which gives the LER at every sufficiently small $p$ from one run. At small $p$, the decoder compares the leading orders of the likelihoods of the logical classes, so whether it fails on an error configuration depends only on the gaps between these orders. We count each failing error configuration at the intersection of the minimum-weight logical operators containing it, and use a recursion that computes the distribution of these gaps at each level from that of the level below. Budgets on this recursion make the predictor run in polynomial time. Our predictions agree with exact enumeration to within $0.1\%$ at level 1 and $p = 3.2 \times 10^{-4}$, and we rank $37$ code configurations in $48$ core-hours, where Monte Carlo sampling would need more than $10^{12}$ shots at $p = 10^{-3}$ for $113$ of the $148$ cases we compare. Our contributions are as follows:
\begin{enumerate}[nosep]
    \item We propose a predictor that computes, without sampling, the leading term $c\,p^{o}$ of the LER of a code at every level of concatenation, and prove that the predictor is exact for any unconcatenated stabilizer code, for every logical observable and arbitrary per-qubit error rates, whenever no budget is exhausted (Theorem~\ref{thm:exact}).
    \item We construct a recursive approach for computing the leading term of the LER for concatenated codes. We then approximate the recursive approach within budgets, in time polynomial in the code parameters and linear in the number of levels.
    \item We validate our predictor against exact enumeration, which it matches to within $0.1\%$ at level 1 and $p = 3.2 \times 10^{-4}$, and to within $0.2\%$ for the iceberg code up to level 3.
    \item We rank $37$ code configurations under four noise biases and two noise models in $48$ core-hours, and identify the best code among the contenders at the same effective distance. We observe that the choice of best code depends on the noise bias.
\end{enumerate}

The rest of this paper is organized as follows. Section~\ref{sec:background} introduces stabilizer codes, ML decoding and concatenation, and surveys prior work. Section~\ref{sec:computing} computes the exact leading-order LER of an unconcatenated code, and Section~\ref{sec:concatenated} extends it to concatenated codes. Section~\ref{sec:approximation} approximates this computation within budgets, Section~\ref{sec:evaluation} evaluates our predictor, and Section~\ref{sec:conclusion} concludes.
\section{Background}\label{sec:background}

\subsection{Stabilizer Codes}
Let $\mathcal{P}_1 = \{I, X, Y, Z\}$ denote the single-qubit Pauli matrices.
Let $\mathcal{P}_n$ be the $n$-qubit Pauli group, consisting of operators $P = \gamma P_1 \otimes \dots \otimes P_n$, where $P_i \in \mathcal{P}_1$ and the phase is $\gamma \in \{\pm 1, \pm i\}$.
The support of $P$ is the set $\supp(P) = \{i \in [n] : P_i \neq I\}$ of qubits on which it acts non-trivially, and its weight is $\wt(P) = |\supp(P)|$.
For Pauli operators $A$ and $B$, we write $A \subseteq B$ when $A$ acts as $B$ on every qubit of $\supp(A)$, which is stronger than $\supp(A) \subseteq \supp(B)$.
%We typically ignore global phases and identify $P$ as $\pm P$ and $\pm i P$

For a given $[[n, k, d]]$ QEC code, the stabilizer group $S = \langle g_1, \dots, g_{n-k} \rangle \subset \mathcal{P}_n$ is an abelian subgroup that does not contain $-I$, specified by $n - k$ independent generators.
The code space is the joint $+1$ eigenspace of $S$, i.e., the states $\ket{\psi}$ with $g \ket{\psi} = \ket{\psi}$ for every $g \in S$.
This encodes $k$ logical qubits into $n$ physical qubits.
A Pauli error $E \in \mathcal{P}_n$ is diagnosed by measuring each generator.
We collect the outcomes in the syndrome,
\begin{equation}
\syn(E)_i = \begin{cases} 0 & \text{if } g_i E = E g_i, \\ 1 & \text{if } g_i E = -E g_i. \end{cases}
\end{equation}
Crucially, $\syn(E \cdot E') = \syn(E) \oplus \syn(E')$, i.e., $\syn$ is a homomorphism.

The kernel of $\syn$ is the normalizer of $S$ in $\mathcal{P}_n$,
\begin{equation}
N(S) = \{P \in \mathcal{P}_n : \syn(P) = 0\} \supseteq S.
\end{equation}
Logical operators are the elements of $N(S)$.
Those that are not stabilizers, up to phase, are non-trivial, and act non-trivially on the encoded information.
Two logical operators that differ by a stabilizer act identically, so distinct logical actions correspond to the cosets in $N(S)/S$, which we call logical classes.
The $4^k$ logical classes form a group, with the product $(A S) \cdot (B S) = (A \cdot B) S$.
The errors with a given syndrome split into cosets of $S$ in the same way.
To name them, we fix for each syndrome one error with that syndrome as its reference, with the identity as the reference of the trivial syndrome.
The logical class $\mathsf{cls}(E)$ of an error $E$ is then the coset in $N(S)/S$ of $E$ times the reference of $\syn(E)$.
Two errors with the same syndrome lie in the same class exactly when they differ by a stabilizer, up to phase.
For two such errors $E$ and $E'$, we also have $\mathsf{cls}(E \cdot E') = \mathsf{cls}(E) \cdot \mathsf{cls}(E')$.
A phase does not change the class, and $\lambda \cdot \lambda$ is the trivial class $S$ for every class $\lambda$.

The usual choice for logical observables is $\bar X_j$ and $\bar Z_j$ for each of the $k$ logical qubits, $j = 1, \dots, k$, satisfying the canonical commutation relation at the logical level, i.e., $\bar X_j \bar Z_j = -\bar Z_j \bar X_j$, while the observables of different logical qubits commute.
We write $\theta$ for a logical observable.
A logical class flips $\theta$ if its elements anti-commute with $\theta$, which does not depend on the chosen element, since $\theta$ commutes with every stabilizer.
Thus, $\lambda \cdot \lambda'$ flips $\theta$ exactly when exactly one of $\lambda$ and $\lambda'$ flips it.
The distance $d$ of the code is the minimum weight of a non-trivial logical operator.

\subsection{Noise and Maximum-Likelihood Decoding}\label{sec:ml-decoding}

We model noise in the \emph{code-capacity} setting, in which each data qubit independently suffers a Pauli error and syndrome measurements are assumed to be perfect.
Qubit $i$ experiences $X$, $Y$, or $Z$ with probabilities $b_{i,X}\, p$, $b_{i,Y}\, p$, and $b_{i,Z}\, p$, respectively, and is left untouched with probability $1 - p$, where $p$ is the physical error rate and the biases $b_{i,P} > 0$ satisfy $b_{i,X} + b_{i,Y} + b_{i,Z} = 1$.
\emph{Depolarizing} noise is the symmetric case $b_{i,X} = b_{i,Y} = b_{i,Z} = 1/3$.
The errors on the $n$ data qubits form an \emph{error configuration} $E = E_1 \otimes \dots \otimes E_n$ with each $E_i \in \mathcal{P}_1$, which occurs with probability
\begin{equation}\label{eq:configuration-prob}
\Pr(E) = b_E\, p^{\wt(E)} (1 - p)^{n - \wt(E)}, \qquad b_E = \prod_{i \in \supp(E)} b_{i,E_i},
\end{equation}
where we call $b_E$ the bias product of $E$.
Error configurations carry no phase, and we denote the set of all $4^n$ of them by $\mathcal{P}_1^{\otimes n}$.

\emph{Biased} noise has the same biases on every qubit, with one Pauli more likely than the other two.
In this work, noise biased toward $P \in \{X, Y, Z\}$ has $b_{i,P} = 0.8$ and a bias of $0.1$ for each of the other two Paulis, so the dominant Pauli is $8\times$ more likely than either of the others.

Given an error configuration $E$, a \emph{decoder} maps the observed syndrome $\syn(E)$ to a recovery operator $R$ with $\syn(R) = \syn(E)$.
Since $\syn$ is a homomorphism, the residual $R \cdot E$ has trivial syndrome, so $R \cdot E \in N(S)$.
Decoding succeeds if $R \cdot E$ is a stabilizer, up to phase, and fails otherwise, i.e., if the residual is a non-trivial logical operator.
Only the class of $R$ matters: writing $\hat\lambda = \mathsf{cls}(R)$ for the class that the decoder returns, the residual has class $\hat\lambda \cdot \mathsf{cls}(E)$.
Decoding therefore fails exactly when $\hat\lambda \neq \mathsf{cls}(E)$, and it fails for a logical observable $\theta$ when $\hat\lambda \cdot \mathsf{cls}(E)$ flips $\theta$.
The decoder only sees the syndrome, so it compares the following likelihoods of the classes at that syndrome.

\begin{definition}[Class likelihood]\label{def:class-likelihood}
For a syndrome $s$ and a logical class $\lambda$, the class likelihood $W_\lambda(s)$ is the total probability $\Pr(E)$ of the error configurations $E$ with $\syn(E) = s$ and $\mathsf{cls}(E) = \lambda$.
\end{definition}

The maximum-likelihood (ML) decoder returns the class with the largest $W_\lambda(s)$, which minimizes the probability that decoding fails at each syndrome.
When several classes share the largest likelihood, we call them \emph{ambiguous}, and the decoder returns each of them with equal probability.
The \emph{logical error rate} (LER) is the probability that decoding fails, which we also call the total LER.

The LER can also be resolved by logical observable: $\LER_\theta(p)$ is the probability that decoding fails for $\theta$.
We report the LER of the first logical qubit, i.e., the probability that decoding fails for $\bar X_1$ or $\bar Z_1$, which is the total LER when $k = 1$.

\begin{figure}
\centering
\begin{tikzpicture}[
  block/.style={draw, rectangle, minimum width=22mm, minimum height=6.5mm, inner sep=2pt, font=\small},
  q1/.style={draw=blue!75!black, line width=0.6pt},
  q2/.style={draw=red!75!black, line width=0.6pt, dashed},
  qk/.style={draw=green!45!black, line width=0.6pt, dash dot},
  port/.style={circle, inner sep=0pt, minimum size=3.6pt},
  fan/.style={line width=0.4pt},
  level/.style={font=\small, anchor=west, align=left},
]
  % the diagram is stretched to fill the text width; the labels sit outside it
  \begin{scope}[xscale=1.05]
  % blocks at level l - 1; logical qubits 1, 2 and k_1 sit on the top edge
  \foreach \i/\x/\lab in {1/0/1, 2/3.0/2, 3/10.6/n_2} {
    \node[block] (in\i) at (\x, 0) {block $\lab$};
    \foreach \j/\f in {1/0.2, 2/0.4, 3/0.8}
      \coordinate (p\i\j) at ($(in\i.north west)!\f!(in\i.north east)$);
    \node[font=\scriptsize, inner sep=0pt, above=1.5pt] at ($(in\i.north west)!0.6!(in\i.north east)$) {$\cdots$};
    \foreach \dx in {-0.66, -0.22, 0.22, 0.66}
      \draw[fan] (in\i.south) -- ++(\dx, -0.65) node[anchor=north, inner sep=0pt, yshift=1pt] {$\vdots$};
  }
  \node at (6.8, 0) {$\cdots$};
  % blocks at level l; qubits 1, 2 and n_2 sit on the bottom edge
  \foreach \j/\x/\lab in {1/2.0/1, 2/4.7/2, 3/8.6/k_1} {
    \node[block] (out\j) at (\x, 3) {block $\lab$};
    \node[font=\scriptsize, inner sep=0pt, below=1.5pt] at ($(out\j.south west)!0.6!(out\j.south east)$) {$\cdots$};
  }
  \node at (6.65, 3) {$\cdots$};
  % the j-th level-l block takes the j-th logical qubit of every level-(l-1) block
  \foreach \i/\f in {1/0.2, 2/0.4, 3/0.8} {
    \draw[q1] (p\i1) -- ($(out1.south west)!\f!(out1.south east)$);
    \draw[q2] (p\i2) -- ($(out2.south west)!\f!(out2.south east)$);
    \draw[qk] (p\i3) -- ($(out3.south west)!\f!(out3.south east)$);
    \node[port, fill=blue!75!black] at (p\i1) {};
    \node[port, fill=red!75!black] at (p\i2) {};
    \node[port, fill=green!45!black] at (p\i3) {};
  }
  % legend, one row under the figure
  \foreach \s/\x/\lab in {q1/-0.2/1, q2/3.7/2, qk/7.6/k_1} {
    \draw[\s] (\x, -2.05) -- ++(0.6, 0);
    \node[right, font=\small] at ({\x + 0.65}, -2.05) {logical qubit $\lab$};
  }
  \end{scope}
  % level labels, with the code each level forms on the physical qubits
  \node[level] at (-4.85, 3) {level $\ell$ \\ $[[n_1 n_2, k_1 k_2, \geq d_1 d_2]]$};
  \node[level] at (-4.85, 0) {level $\ell - 1$ \\ $[[n_1, k_1, d_1]]$};
\end{tikzpicture}
\caption{Interleaved concatenation. Each block at level $\ell - 1$ encodes $k_1$ logical qubits in an $[[n_1, k_1, d_1]]$ code. Each block at level $\ell$ uses $n_2$ logical qubits from the lower level as the physical qubits of an $[[n_2, k_2, d_2]]$ code. The $j\textsuperscript{th}$ block at level $\ell$ takes the $j\textsuperscript{th}$ logical qubit of every block at level $\ell - 1$. Thus, no block at level $\ell$ contains two logical qubits of the same block at level $\ell - 1$. Together, the $k_1$ blocks at level $\ell$ form an $[[n_1 n_2, k_1 k_2, \geq d_1 d_2]]$ code. The edges below each block at level $\ell - 1$ continue its tree down to the physical qubits.}
\label{fig:interleaved}
\end{figure}

\subsection{Concatenation}\label{sec:tree-decoder}
Code concatenation is a systematic way to construct larger QEC codes from smaller codes, raising the effective distance without needing to design new codes with the desired $[[n,k,d]]$ parameters. Given an inner $[[n_1, k_1, d_1]]$ code and an outer $[[n_2, k_2, d_2]]$ code, each logical qubit of the inner code serves as a physical qubit of the outer code. This induces a natural tree representation. The leaves are physical qubits, and each internal node at level $\ell$ is a single level-$\ell$ block. The children of a node are the level-$(\ell-1)$ blocks whose logical qubits it encodes, and the root carries the encoded logical information.

Exact ML decoding of a general stabilizer code is \#P-hard~\cite{iyer_hardnessdecoding_2013} and, by direct enumeration, requires time exponential in the block size. The tree structure, however, permits decoding level by level, from the leaves to the root~\cite{poulin_optimal_2006}. For each child block $j$, the decoder computes the class likelihoods $W^{(j)}_\lambda(s^{(j)})$ of Definition~\ref{def:class-likelihood} at the syndrome $s^{(j)}$ of that block, which are proportional to the posterior distribution over the logical classes of that block. Rather than committing to a single correction, it passes this \emph{soft information} to the parent. For a block encoding a single logical qubit, this message consists of four values, one per logical class. At the leaves, a physical qubit $i$ passes up $1 - p$ for $I$ and $b_{i,P}\, p$ for each $P \in \{X, Y, Z\}$.

At the next level, the child logical qubits effectively act as physical qubits for the parent block. Since the noise on different child blocks is independent, the parent weighs an error $P = P_1 \otimes \dots \otimes P_n$ on its qubits by the product of the likelihoods that the child blocks give to the classes of $\bar{P_1}, \dots, \bar{P_n}$, where $\bar{P_j}$ is the logical operator of child block $j$ that acts as $P_j$ on its logical qubit. Given the parent syndrome $s$, the decoder sums these weights over all errors consistent with $s$ and a given parent logical class $\lambda$,
\begin{equation}\label{eq:tree-likelihood}
W_\lambda(s) = \sum_{\substack{P \in \mathcal{P}_1^{\otimes n} \\ \syn(P) = s,\ \mathsf{cls}(P) = \lambda}} \; \prod_{j=1}^{n} W^{(j)}_{\mathsf{cls}(\bar{P_j})}\left( s^{(j)} \right).
\end{equation}
The parent block passes these class likelihoods on to its own parent, and at the root, the decoder returns the class with the largest one.

\begin{theorem}[Tree decoding is ML decoding~\cite{poulin_optimal_2006}]\label{thm:tree-ml}
If every block encodes a single logical qubit, then the class likelihoods of Equation~\eqref{eq:tree-likelihood} at the root are the class likelihoods of the whole concatenated code, so the tree decoder is its ML decoder.
\end{theorem}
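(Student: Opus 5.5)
I will prove Theorem~\ref{thm:tree-ml} by induction on the number of levels, showing that for every block in the tree, the class likelihoods $W_\lambda(s)$ computed by Equation~\eqref{eq:tree-likelihood} equal the class likelihoods of Definition~\ref{def:class-likelihood} for the concatenated code formed by the subtree rooted at that block. The relevant quantities are the syndrome of that code, which collects all syndromes of the block and its descendants, and its logical class. At the leaves the claim holds by construction: a single physical qubit is a trivial $[[1,1,1]]$ code whose four classes are $I,X,Y,Z$, with likelihoods $1-p$ and $b_{i,P}\,p$. Since the ML decoder returns the argmax of the class likelihoods, equality at the root immediately gives that the tree decoder is the ML decoder.

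For the inductive step, fix a block with children $1,\dots,n$, each encoding one logical qubit. By the induction hypothesis, $W^{(j)}_\mu(s^{(j)})$ is the total probability of the physical configurations $E^{(j)}$ under child $j$ with full syndrome $s^{(j)}$ and child class $\mu$. I first fix the structure of the concatenated code. Its stabilizer group is generated by the children's stabilizers together with the encoded images of the parent's generators, where each parent Pauli $P_j$ is replaced by the reference logical $\bar P_j$ of child $j$. I also fix references for syndromes compatibly: the reference for a full syndrome $(s^{(1)},\dots,s^{(n)},s)$ is the product of the child references times the encoded parent reference for $s$.

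The key bijection is then the following. A physical configuration $E = \bigotimes_j E^{(j)}$ determines, for each child, a class $\mathsf{cls}(E^{(j)})$, which is a single-qubit Pauli $P_j$ up to phase because $k=1$. Conversely, the configurations with given child syndromes and child classes $P_1,\dots,P_n$ are exactly the product sets. I will check that the parent-level syndrome of $E$ equals $\syn(P)$ with $P=\bigotimes_j P_j$. The reason is that each encoded parent generator is a product of child logicals, and its commutation with $E^{(j)}$ is determined by the commutation of $\bar{(g)_j}$ with the class of $E^{(j)}$, once the child references are accounted for. I will also check that the class of $E$ in the big code equals $\mathsf{cls}(P)$ in the parent code, using the chosen product references and the homomorphism property of $\mathsf{cls}$ for equal syndromes. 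Independence of noise across children then factorizes $\Pr(E)=\prod_j \Pr(E^{(j)})$. Summing over $E$ grouped by $(P_1,\dots,P_n)$ turns the sum over each child's configurations into $W^{(j)}_{P_j}(s^{(j)})$, which reproduces Equation~\eqref{eq:tree-likelihood} exactly.

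The main obstacle is the bookkeeping of references and phases in the syndrome and class identification. Child configurations with nontrivial child syndrome are not themselves logicals, so the statement ``$E^{(j)}$ acts as $P_j$ on the child logical qubit'' only makes sense relative to the chosen reference. The parent-level syndrome bits must likewise be measured relative to these references, so I need the convention that the encoded parent generators' outcomes are corrected by the commutation with the child references. I will handle this by defining the big code's references as products, as above, and verifying the syndrome map is affine with the offset absorbed into the reference. The single-logical-qubit hypothesis is used precisely here: it makes the child class a single Pauli $P_j$, so the parent weight is a product over independent children. With $k_1>1$, the classes of the several logical qubits in one child block would be correlated, and the product form would fail.
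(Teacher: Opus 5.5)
The paper does not prove this theorem. It imports it from~\cite{poulin_optimal_2006} and only uses it, for instance to identify Equation~\eqref{eq:parent-likelihood} with the class likelihoods of the parent block on its $N$ physical qubits. Your induction over levels is the standard argument behind that citation, and it is correct.

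The core step works as you describe. With product references, an error $E = E^{(1)} \cdots E^{(n)}$ whose child classes are $\mathsf{cls}(\bar{P_1}), \ldots, \mathsf{cls}(\bar{P_n})$ has corrected parent syndrome $\syn(P)$. In the concatenated code, its class is the encoded image of $\mathsf{cls}(P)$. Hence the configurations counted by the concatenated code's class likelihood split into product sets indexed by $P$. Independence of the children then turns each set into $\prod_j W^{(j)}_{\mathsf{cls}(\bar{P_j})}(s^{(j)})$, which reproduces Equation~\eqref{eq:tree-likelihood} term by term.

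The reference bookkeeping you flag is a genuine point that the paper leaves implicit. Equation~\eqref{eq:tree-likelihood} holds with the raw measured parent syndrome only when the child references commute with the chosen child logicals $\bar X$, $\bar Y$, $\bar Z$, for example when the references are built from the destabilizers of a symplectic basis. Otherwise $s$ must be shifted by the commutation bits of the child references with the encoded parent generators, as you propose. That shift depends only on the child syndromes, so it is a bijective relabeling of the full syndromes and does not change which class the ML decoder returns.
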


 For blocks with $k > 1$, we use interleaved concatenation~\cite{litinski_blocklet_2025}, shown in Figure~\ref{fig:interleaved}, so that no block contains two logical qubits of the same child block. The decoder we use passes the likelihoods of each logical qubit of a child block separately, so Theorem~\ref{thm:tree-ml} does not apply.

\begin{figure}
    \centering
    \includegraphics[width=\linewidth]{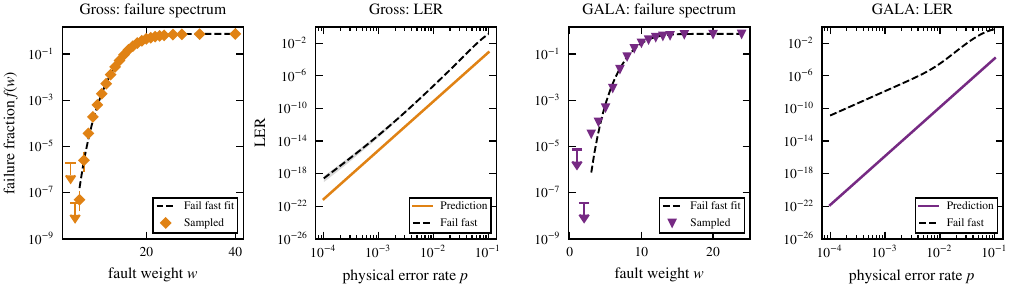}
    \caption{Fail fast with BP-OSD against our prediction on the gross and GALA codes at level 1. For each code, the failure spectrum sampled with BP-OSD with the fail fast fit to it, and the LER that fail fast obtains from the spectrum against our prediction.}
    \label{fig:failfast}
\end{figure}

\subsection{Prior Work}
Fail fast~\cite{beverland_fail_2025} samples the fraction of error configurations of each weight on which the decoder fails, and identifies the failing error configurations of the leading weight through the logical operators of weight $d$, and $d + 1$ for odd $d$, motivating the approach used in Theorem~\ref{thm:exact}. With BP-OSD, Figures~\ref{fig:failfast} and~\ref{fig:failfast-mitten} show that fail fast estimates LERs between $69\times$ and $2.5 \times 10^{9}\times$ higher than our prediction at $p = 10^{-3}$. For every code in these figures, the sampled failure spectrum starts below weight $d/2$, so the estimate of fail fast has a lower leading order than our prediction. For the gross code, BP-OSD fails on only $5$ of $10^{8}$ error configurations of weight $5$, so this lower order dominates only below $p \approx 10^{-3}$.

\begin{figure}
    \centering
    \includegraphics[width=\linewidth]{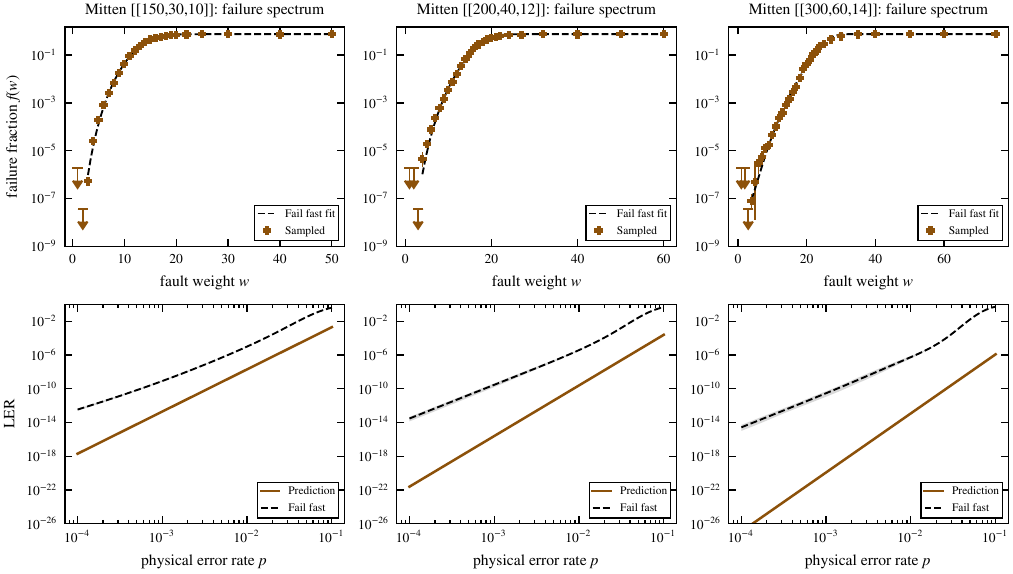}
    \caption{Fail fast with BP-OSD against our prediction on the Mitten codes at level 1, as in Figure~\ref{fig:failfast}. Top: the failure spectrum sampled with BP-OSD. Bottom: the LER that fail fast obtains from it against our prediction.}
    \label{fig:failfast-mitten}
\end{figure}
\section{Exact Leading-Order LER for Unconcatenated Codes}\label{sec:computing}
Our main result for an unconcatenated code, Theorem~\ref{thm:exact}, gives the leading term of the LER exactly. For a logical observable $\theta$ that some logical operator of weight $d$ flips, $\LER_\theta(p) / p^{\lceil d/2 \rceil}$ tends to a coefficient $c_\theta$ as $p \to 0$. This coefficient sums the bias products of the error configurations of weight $\lceil d/2 \rceil$ whose support lies inside that of a logical operator of weight $d$ or $d + 1$, each multiplied by the probability that the decoder fails for $\theta$ on that error configuration. Computing $c_\theta$ needs only these logical operators and the stabilizers of weight at most $d + 1$, without sampling, and $c_\theta\, p^{\lceil d/2 \rceil}$ then approximates $\LER_\theta(p)$ at every small $p$ at once. The result holds for any number of logical qubits and any positive biases.

A soft-information decoder outputs not only the most likely logical class but also its confidence, i.e., the complementary gap~\cite{gidney_yoked_2023} between the log-likelihoods of the two most likely classes, which depends only on the syndrome. As $p \to 0$, each class likelihood is asymptotically a single power of $p$, so this gap, measured from the true class in units of $\log(1/p)$, is an integer: the decoder fails when the gap is negative, and when the gap is zero the leading coefficients decide. We can therefore count error configurations by their gaps, with dynamic programming (DP) over accumulated gaps, instead of sampling them or enumerating all $4^n$ of them. Section~\ref{sec:leading-order} reduces the decoder's decision to comparing these powers and coefficients, Section~\ref{sec:failing} identifies the error configurations on which it fails and arrives at Theorem~\ref{thm:exact}, and Section~\ref{sec:counting} counts each of them once. The proofs of the lemmas and of Corollary~\ref{cor:gap-fraction} are in Appendix~\ref{app:unconcatenated}.

\subsection{Soft-Information Decoding Computes Gaps}\label{sec:leading-order}
Grouping the error configurations in the class likelihood $W_\lambda(s)$ of Definition~\ref{def:class-likelihood} by weight, Equation~\eqref{eq:configuration-prob} gives
\begin{equation}\label{eq:weight-expansion}
W_\lambda(s) = \sum_{w=0}^{n} m_{\lambda,w}(s)\, p^{w} (1-p)^{n-w}, \qquad m_{\lambda,w}(s) = \sum_{\substack{\syn(E) = s,\ \mathsf{cls}(E) = \lambda \\ \wt(E) = w}} b_E.
\end{equation}
Writing $\omega_\lambda(s)$ for the smallest $w$ with $m_{\lambda,w}(s) > 0$, which exists because every error rate is positive, so every class contains error configurations at every syndrome, the term $w = \omega_\lambda(s)$ dominates the sum as $p \to 0$. We make this precise with notation that we use throughout for likelihoods, and for their sums, products and quotients that the decoder compares. For positive functions $f(p)$ and $g(p)$ on $0 < p < 1$, we write $f \sim g$, and say that $f$ is asymptotic to $g$, if $\lim_{p \to 0} f(p)/g(p) = 1$. We say that $f$ has leading term $q\, p^{o}$, for an integer $o$ and $q > 0$, if $f \sim q\, p^{o}$.

\begin{lemma}[Leading terms]\label{lem:leading-terms}
Let $f_a$ and $g_a$ be positive functions of $p$ on $0 < p < 1$ for $a = 1, \ldots, r$. Then
\begin{enumerate}
    \item[(i)] if $f_a \sim g_a$ for every $a$, then $\sum_a f_a \sim \sum_a g_a$, $\max_a f_a \sim \max_a g_a$, $\prod_a f_a \sim \prod_a g_a$ and $f_1 / f_2 \sim g_1 / g_2$;
    \item[(ii)] \sloppy for integers $o_a$ and $q_a > 0$, with $o = \min_a o_a$, $\sum_a q_a\, p^{o_a} \sim \left( \sum_{a \,:\, o_a = o} q_a \right) p^{o}$ and $\max_a q_a\, p^{o_a} \sim \left( \max_{a \,:\, o_a = o} q_a \right) p^{o}$;
    \item[(iii)] if $q\, p^{o} \sim q'\, p^{o'}$ with $q, q' > 0$, then $o = o'$ and $q = q'$.
\end{enumerate}
\end{lemma}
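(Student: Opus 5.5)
All three parts are elementary limit arguments about positive functions, so I will work directly from the definition $f \sim g \iff f/g \to 1$ as $p \to 0$. I will write $f_a = g_a(1+\epsilon_a)$ with $\epsilon_a(p) \to 0$, and treat each part in turn.

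For (i), the product and quotient cases are immediate: $\prod_a f_a / \prod_a g_a = \prod_a (1+\epsilon_a) \to 1$ since the product is finite, and $(f_1/f_2)/(g_1/g_2) = (1+\epsilon_1)/(1+\epsilon_2) \to 1$. For the sum, since all $g_a > 0$, the ratio $\sum_a f_a / \sum_a g_a$ is a convex combination of the ratios $f_a/g_a$, with weights $g_a/\sum_b g_b$. It therefore lies between $\min_a f_a/g_a$ and $\max_a f_a/g_a$, both of which tend to $1$. For the maximum, I will use the sandwich $\min_a (1+\epsilon_a) \le f_a/g_a \le \max_a(1+\epsilon_a)$. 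Multiplying by $g_a$ and taking the maximum over $a$ gives $(\min_a(1+\epsilon_a)) \max_a g_a \le \max_a f_a \le (\max_a(1+\epsilon_a)) \max_a g_a$, and the claim follows by squeezing. Finiteness of $r$ is what makes the uniform error terms $\min_a \epsilon_a$ and $\max_a \epsilon_a$ tend to $0$.

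For (ii), I divide through by $p^{o}$. Each term with $o_a > o$ contributes $q_a p^{o_a - o} \to 0$, while the terms with $o_a = o$ contribute the constants $q_a$. Hence $\sum_a q_a p^{o_a}/p^o \to \sum_{a: o_a = o} q_a > 0$, which gives the first asymptotic. For the maximum, I first argue that every $a$ with $o_a > o$ eventually satisfies $q_a p^{o_a} < \max_{b: o_b = o} q_b\, p^{o}$ for small $p$, because $q_a p^{o_a - o} \to 0$ while the right-hand side divided by $p^o$ is a positive constant. Since there are finitely many such $a$, below some threshold of $p$ the maximum is attained among the terms with $o_a = o$ and equals $(\max_{a:o_a=o} q_a) p^o$ exactly.

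For (iii), the ratio $(q/q') p^{o-o'}$ must tend to $1$. If $o > o'$ the ratio tends to $0$, and if $o < o'$ it diverges, so $o = o'$. The ratio is then the constant $q/q'$, which forces $q = q'$.

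I expect no real obstacle here. The only step needing care is the maximum in (ii), where I must make sure the eventual-dominance argument uses the finiteness of the index set.
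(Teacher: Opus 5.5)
Your proof is correct and takes the same approach as the paper. Both work directly from the definition of $\sim$ by taking ratios, and for the sum and maximum in (i) both bound the ratio between the smallest and largest of the $f_a/g_a$. Your treatment of (ii) and (iii) spells out the limit computations that the paper's one-line proof leaves implicit.
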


Since $(1-p)^{n-w} \to 1$, each term of Equation~\eqref{eq:weight-expansion} with $m_{\lambda,w}(s) > 0$ is asymptotic to $m_{\lambda,w}(s)\, p^{w}$, so Lemma~\ref{lem:leading-terms}~(i) and~(ii) give,
\begin{equation}\label{eq:leading-limit}
W_\lambda(s) \sim m_{\lambda,\omega_\lambda(s)}(s)\, p^{\omega_\lambda(s)}.
\end{equation}

\begin{definition}[Leading order and leading coefficient]\label{def:leading-order}
The leading order of class $\lambda$ at syndrome $s$ is $\omega_\lambda(s)$, and its leading coefficient is $m_\lambda(s) = m_{\lambda,\omega_\lambda(s)}(s)$, so that $W_\lambda(s) \sim m_\lambda(s)\, p^{\omega_\lambda(s)}$ by Equation~\eqref{eq:leading-limit}.
\end{definition}

\begin{lemma}[Leading-order decision]\label{lem:decision-rule}
Fix a syndrome $s$, and suppose that among the classes of smallest leading order a single class $\lambda^*$ has the largest leading coefficient. Then there is a $p_0 > 0$ such that for every $p < p_0$ the ML decoder returns $\lambda^*$.
\end{lemma}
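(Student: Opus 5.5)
We will show that, for every class $\lambda \neq \lambda^*$, the ratio $W_\lambda(s)/W_{\lambda^*}(s)$ tends to a limit strictly less than $1$ as $p \to 0$. Since there are only finitely many classes, $4^k$ of them, we can then take $p_0$ to be the minimum of the thresholds obtained for the individual classes.

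Fix $s$, let $\omega^* = \omega_{\lambda^*}(s)$ be the smallest leading order, and let $m^* = m_{\lambda^*}(s)$. By Definition~\ref{def:leading-order}, $W_\lambda(s) \sim m_\lambda(s)\, p^{\omega_\lambda(s)}$ for every $\lambda$. Lemma~\ref{lem:leading-terms}~(i), applied to quotients, then gives
\[
\frac{W_\lambda(s)}{W_{\lambda^*}(s)} \sim \frac{m_\lambda(s)}{m^*}\, p^{\omega_\lambda(s) - \omega^*}.
\]
By the definition of $\sim$, the left side therefore has the same limit as the right side as $p \to 0$. We split into two cases.
\begin{enumerate}
    \item[(a)] If $\omega_\lambda(s) > \omega^*$, the right side tends to $0$, because the exponent is a positive integer.
    \item[(b)] If $\omega_\lambda(s) = \omega^*$, the right side is the constant $m_\lambda(s)/m^*$. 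This constant is strictly less than $1$ by the hypothesis that $\lambda^*$ is the unique maximizer of the leading coefficient among the classes of smallest order.
\end{enumerate}
No class has order below $\omega^*$, since $\omega^*$ is the minimum.

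In both cases $\lim_{p\to0} W_\lambda(s)/W_{\lambda^*}(s) = c_\lambda < 1$. The definition of the limit then gives a $p_\lambda > 0$ such that the ratio is below $(1+c_\lambda)/2 < 1$ for all $p < p_\lambda$. Set $p_0 = \min_{\lambda \neq \lambda^*} p_\lambda > 0$. For $p < p_0$, $W_{\lambda^*}(s)$ strictly exceeds every other class likelihood. It is thus the unique maximum, with no ambiguity, and the ML decoder returns $\lambda^*$.

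The only delicate point is the transfer from $\sim$ to a strict inequality. Asymptotic equivalence alone does not order two functions, so we need the limit of the ratio to be strictly below $1$. This is what the uniqueness hypothesis on the leading coefficient guarantees in case (b). The rest is bookkeeping with Lemma~\ref{lem:leading-terms} and the finiteness of the set of classes.
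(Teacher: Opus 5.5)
Your proposal is correct and follows the paper's proof essentially line for line. The paper likewise applies Lemma~\ref{lem:leading-terms}~(i) to the quotient $W_\lambda(s)/W_{\lambda^*}(s)$, notes that it tends to $0$ or to a constant below $1$, and takes $p_0$ as the minimum of the finitely many resulting thresholds.
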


The argument of Lemma~\ref{lem:decision-rule} applies to any two classes whose leading orders or leading coefficients differ: below some threshold, the class with the smaller leading order, or at equal order the larger leading coefficient, is strictly more likely. Since there are finitely many syndromes and pairs of classes, the smallest of these thresholds, $p_0^*$, is positive.

\begin{definition}[Sufficiently small error rate]\label{def:suff-small}
A physical error rate $p$ is \emph{sufficiently small} if $p < p_0^*$.
\end{definition}

When several classes share both the smallest leading order and the largest leading coefficient, Lemma~\ref{lem:decision-rule} does not decide between them, and we split the error configurations evenly among them. Any other way of resolving the ambiguity gives the same leading coefficient for the total LER, but for a single logical observable it can change the leading coefficient when three or more classes are ambiguous.

\begin{definition}[Signed gap and competing classes]\label{def:signed-gap}
For an error configuration $E$ with $\lambda_0 = \mathsf{cls}(E)$ and syndrome $s = \syn(E)$, we call the classes $\lambda \neq \lambda_0$ competing classes. For each competing class $\lambda$, its signed gap and its coefficient ratio at $E$ are,
\begin{equation}\label{eq:class-gap}
\gap_\lambda(E) = \omega_\lambda(s) - \omega_{\lambda_0}(s), \qquad \rho_\lambda(E) = \frac{m_\lambda(s)}{m_{\lambda_0}(s)}.
\end{equation}
Write $\omega_{\neg\lambda_0}(s) = \min_{\lambda \neq \lambda_0} \omega_\lambda(s)$, let $\Lambda'_E = \{\lambda \neq \lambda_0 : \omega_\lambda(s) = \omega_{\neg\lambda_0}(s)\}$ be the set of competing classes of smallest leading order, and let $\rho_{\neg\lambda_0}(E) = \max_{\lambda \in \Lambda'_E} \rho_\lambda(E)$. The signed gap of $E$ is $\gap(E) = \omega_{\neg\lambda_0}(s) - \omega_{\lambda_0}(s)$, i.e., the smallest $\gap_\lambda(E)$ over the competing classes.
\end{definition}

Dividing Equation~\eqref{eq:leading-limit} for a competing class $\lambda$ by the same equation for $\lambda_0$, Lemma~\ref{lem:leading-terms}~(i) gives the signed gap as a likelihood ratio,
\begin{equation}\label{eq:class-ratio}
\frac{W_\lambda(s)}{W_{\lambda_0}(s)} \sim \rho_\lambda(E)\, p^{\gap_\lambda(E)}.
\end{equation}
When $\gap(E) = 0$, the coefficient ratios decide, which is why our predictor also stores them.

\subsection{Identifying Failing Error Configurations}\label{sec:failing}
To predict the LER, we first write it in the same form as the class likelihood.

\begin{definition}[Failure fraction]\label{def:failure-fraction}
For a logical observable $\theta$ and an error configuration $E$ with $\lambda_0 = \mathsf{cls}(E)$, the failure fraction $f_\theta(E) \in [0, 1]$ is the probability that $\hat\lambda \cdot \lambda_0$ flips $\theta$, where $\hat\lambda$ is the class returned by the decoder on $E$.
\end{definition}

The signed gap gives the failure fraction directly.

\begin{corollary}[Failure fraction from the gap]\label{cor:gap-fraction}
Let $E$ be an error configuration with $\lambda_0 = \mathsf{cls}(E)$ and syndrome $s$, and let $\theta$ be a logical observable. Let $\Lambda_E = \{\lambda \in \Lambda'_E : \rho_\lambda(E) = \rho_{\neg\lambda_0}(E)\}$, and let $\Lambda_E^\theta \subseteq \Lambda_E$ be the subset for which $\lambda \cdot \lambda_0$ flips $\theta$. Then, for \emph{sufficiently small} $p$,
\begin{equation}\label{eq:gap-fraction}
f_\theta(E) = \begin{cases} 0 & \text{if } \gap(E) > 0, \text{ or } \gap(E) = 0 \text{ and } \rho_{\neg\lambda_0}(E) < 1, \\ |\Lambda_E^\theta| / (|\Lambda_E| + 1) & \text{if } \gap(E) = 0 \text{ and } \rho_{\neg\lambda_0}(E) = 1, \\ |\Lambda_E^\theta| / |\Lambda_E| & \text{if } \gap(E) < 0, \text{ or } \gap(E) = 0 \text{ and } \rho_{\neg\lambda_0}(E) > 1. \end{cases}
\end{equation}
\end{corollary}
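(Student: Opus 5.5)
The plan is to fix the syndrome $s = \syn(E)$ and read off, for every sufficiently small $p$, which set of classes the ML decoder returns, then count those that flip $\theta$. Since $W_\lambda(s)$ depends only on $s$ and $\lambda$, the decoder's output distribution on $E$ is uniform over the set $A(s)$ of maximizers of $W_\lambda(s)$. So $f_\theta(E) = |\{\lambda \in A(s) : \lambda\cdot\lambda_0 \text{ flips } \theta\}|/|A(s)|$. Note that $\lambda_0\cdot\lambda_0$ is trivial and never flips $\theta$, so $\lambda_0$ contributes to the denominator only. It therefore suffices to show that for $p < p_0^*$ the set $A(s)$ equals $\emptyset\cup\{\lambda_0\}$, $\Lambda_E \cup \{\lambda_0\}$, or $\Lambda_E$ in the three cases respectively. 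In the first case this means $A(s) = \{\lambda_0\}$.

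The key tool is the pairwise form of Lemma~\ref{lem:decision-rule} stated after it. For any two classes $\lambda,\mu$ at syndrome $s$, if $(\omega_\lambda,-m_\lambda) \neq (\omega_\mu,-m_\mu)$ lexicographically, then for $p<p_0^*$ the one with smaller leading order, or at equal order larger leading coefficient, has strictly larger $W$. I would justify this from Equation~\eqref{eq:class-ratio}: $W_\lambda/W_\mu \sim (m_\lambda/m_\mu)p^{\omega_\lambda-\omega_\mu}$, which tends to $0$ or $\infty$ if the orders differ, or to a constant $\neq 1$ otherwise. Hence the ratio is eventually on one side of $1$, and $p_0^*$ was defined as the minimum of these thresholds.

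The delicate step, and the main obstacle, is the converse for classes with identical leading order and coefficient. Here the asymptotics only give $W_\lambda/W_\mu \to 1$, not equality. To conclude that they are both in $A(s)$ I have to interpret the paper's convention after Definition~\ref{def:suff-small}: classes sharing the smallest leading order and largest coefficient are treated as ambiguous, with the decoder, i.e.\ the split of configurations, dividing evenly among them. I would state this explicitly as the operative decoder model at leading order rather than literal equality of $W$. This convention matters only for the leading coefficient, which is what the corollary feeds into.

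With this, the lexicographic order on $(\omega,-m)$ determines $A(s)$ as the set of lexicographic minima, and the case analysis follows from Definition~\ref{def:signed-gap}.
\begin{itemize}
\item If $\gap(E)>0$, then $\lambda_0$ alone has the minimal order, so $A(s)=\{\lambda_0\}$.
\item If $\gap(E)=0$, then $\lambda_0$ ties in order with $\Lambda'_E$. Comparing coefficients through $\rho$: if $\rho_{\neg\lambda_0}<1$, $\lambda_0$ wins outright. If $\rho_{\neg\lambda_0}=1$, the maximizers are $\Lambda_E\cup\{\lambda_0\}$. If $\rho_{\neg\lambda_0}>1$, the maximizers are $\Lambda_E$, the competitors of maximal ratio.
\item If $\gap(E)<0$, then $\lambda_0$ is excluded, and the minimizers are those in $\Lambda'_E$ of maximal coefficient, namely $\Lambda_E$, since $\rho$ is a fixed rescaling by $m_{\lambda_0}$.
\end{itemize}
Substituting into the uniform-distribution formula yields the three cases of Equation~\eqref{eq:gap-fraction}.
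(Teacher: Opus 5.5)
Your proposal is correct and follows essentially the same route as the paper. In each of the three cases you identify the classes of smallest leading order and largest leading coefficient as $\{\lambda_0\}$, $\Lambda_E \cup \{\lambda_0\}$ or $\Lambda_E$, use the pairwise form of Lemma~\ref{lem:decision-rule} with the even-split rule for ties, and observe that $\lambda_0$ never fails while $\lambda \in \Lambda_E$ fails exactly when $\lambda \in \Lambda_E^\theta$. Your explicit remark that equal leading terms need not make the likelihoods $W_\lambda(s)$ exactly equal, so the even split is a convention, is precisely the convention the paper adopts after Definition~\ref{def:suff-small}.
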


By Corollary~\ref{cor:gap-fraction}, $f_\theta(E)$ does not depend on $p$ for \emph{sufficiently small} $p$. Summing $\Pr(E)\, f_\theta(E)$ over all error configurations and grouping them by weight as in Equation~\eqref{eq:weight-expansion} gives
\begin{equation}\label{eq:ler-expansion}
\LER_\theta(p) = \sum_{E} \Pr(E)\, f_\theta(E) = \sum_{w=0}^{n} F_{\theta,w}\, p^{w} (1-p)^{n-w}, \qquad F_{\theta,w} = \sum_{\wt(E) = w} f_\theta(E)\, b_E.
\end{equation}
Since $(1-p)^{n-w} \to 1$, Lemma~\ref{lem:leading-terms}~(i) and~(ii) apply to Equation~\eqref{eq:ler-expansion} as they do to Equation~\eqref{eq:weight-expansion}.

\begin{definition}[Leading term of the LER]\label{def:ler-leading}
The leading order of the LER of $\theta$ is $o_\theta$, the smallest $w$ with $F_{\theta,w} > 0$, and its leading coefficient is $c_\theta = F_{\theta,o_\theta}$, so that $\LER_\theta(p) \sim c_\theta\, p^{o_\theta}$.
\end{definition}

We now formally define the problem that our predictor addresses:

\begin{problem}[Leading-order LER prediction]\label{prob:leading-ler}
Given a stabilizer code, its set $\mathcal{L}$ of minimum-weight logical operators, its set $\mathcal{N}$ of logical operators of weight $d + 1$ (used only for odd $d$), its per-qubit error biases $b_{i,P}$ and a logical observable $\theta$, compute the leading order $o_\theta$ and the leading coefficient $c_\theta$ of Definition~\ref{def:ler-leading}, without sampling error configurations and without fixing the physical error rate $p$.
\end{problem}

\begin{lemma}[Minimum failing weight]\label{lem:min-weight}
Let $\theta$ be a logical observable. For \emph{sufficiently small} $p$, the ML decoder corrects every error configuration of weight less than $\lceil d/2 \rceil$, so $o_\theta \ge \lceil d/2 \rceil$. If, in addition, some logical operator of weight $d$ flips $\theta$, then $o_\theta = \lceil d/2 \rceil$.
\end{lemma}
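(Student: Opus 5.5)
The lower bound will follow from Corollary~\ref{cor:gap-fraction}, and the equality from exhibiting one weight-$\lceil d/2\rceil$ configuration with positive failure fraction. Throughout, take $p$ sufficiently small, so that $f_\theta(E)$ is given by Equation~\eqref{eq:gap-fraction} and does not depend on $p$.

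\emph{Lower bound.} Let $E$ have weight $w < \lceil d/2 \rceil$, so $2w < d$; let $s = \syn(E)$ and $\lambda_0 = \mathsf{cls}(E)$. Then $\omega_{\lambda_0}(s) \le w$. Now take any competing class $\lambda \neq \lambda_0$ and any configuration $E'$ with $\syn(E') = s$ and $\mathsf{cls}(E') = \lambda$. The product $E \cdot E'$ has trivial syndrome, and its class is $\lambda_0 \cdot \lambda$, which is non-trivial; here I use that two errors with the same syndrome share the reference, so classes multiply. Hence $E\cdot E'$ is a non-trivial logical operator, and $\wt(E) + \wt(E') \ge \wt(E\cdot E') \ge d$. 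This gives $\wt(E') \ge d - w > w$, so $\omega_\lambda(s) > w \ge \omega_{\lambda_0}(s)$. Therefore $\gap(E) > 0$, and $f_\theta(E) = 0$ by Corollary~\ref{cor:gap-fraction}. It follows that $F_{\theta,w} = 0$ for all $w < \lceil d/2\rceil$, so $o_\theta \ge \lceil d/2\rceil$.

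\emph{Equality.} Let $L$ be a logical operator of weight $d$ that flips $\theta$; as an element of $N(S)$ it is a non-trivial class $\lambda_L$ flipping $\theta$. I would split $L = A\cdot B$, where $A$ is the restriction of $L$ to $\lceil d/2\rceil$ qubits of its support and $B$ is the restriction to the remaining $\lfloor d/2\rfloor$ qubits. Both have positive bias products. The configurations $A$ and $B$ share a syndrome, since $\syn(A)\oplus\syn(B) = \syn(L) = 0$. Their classes satisfy $\mathsf{cls}(A)\cdot\mathsf{cls}(B)\cdot\lambda_L$ trivial, more precisely $\mathsf{cls}(B) = \mathsf{cls}(A)\cdot \lambda_L$. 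Take $E = A$, so $\lambda_0 = \mathsf{cls}(A)$, and let $\lambda = \lambda_0\lambda_L$. Then $\omega_\lambda(s) \le \lfloor d/2\rfloor \le \lceil d/2\rceil$. On the other hand, any $E''$ in class $\lambda_0$ at syndrome $s$ multiplied by $B$ gives a non-trivial logical operator, so $\wt(E'') \ge d - \lfloor d/2\rfloor = \lceil d/2\rceil$. Hence $\omega_{\lambda_0}(s) = \lceil d/2\rceil$ and $\gap(E) \le 0$.

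The main obstacle is turning $\gap(E)\le 0$ into $f_\theta(E) > 0$. Corollary~\ref{cor:gap-fraction} yields a positive fraction only if $\Lambda_E^\theta$ is non-empty, and a class attaining the gap or coefficient maximum might not flip $\theta$. I would handle this by a symmetric argument rather than by controlling which competing class wins.

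\emph{Symmetric argument.} For odd $d$, I have $\gap_\lambda(E) = -1 < 0$, so the winning classes all have order $\lfloor d/2 \rfloor$. Suppose that every winner $\lambda'$ fails to flip $\theta$ relative to $\lambda_0$. Then I would compare with the configurations in the winner's class at order $\lfloor d/2\rfloor$. Rather than chase this, the cleaner route is to bound $F_{\theta,\lceil d/2\rceil}$ below by summing over all $2^d$-type splits, which I expect to be laborious.

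\emph{Simpler route.} Note that $\LER_\theta(p)$ is at least the failure probability of \emph{any} fixed decoder's complement, bounded through the pair $\{A, B\}$. The ML decoder at syndrome $s$ returns a single class distribution $\hat\lambda$. Among the two configurations $A$ (class $\lambda_0$) and $B$ (class $\lambda_0\lambda_L$), the products $\hat\lambda\cdot\lambda_0$ and $\hat\lambda\cdot\lambda_0\lambda_L$ differ by $\lambda_L$, which flips $\theta$. Hence exactly one of them flips $\theta$ for each returned class, so $f_\theta(A) + f_\theta(B) = 1$.

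For even $d$, both $A$ and $B$ have weight $d/2$, so $F_{\theta,d/2} \ge \min(b_A,b_B)\,(f_\theta(A)+f_\theta(B)) > 0$. For odd $d$, $B$ has weight $\lfloor d/2\rfloor$, and the lower bound already forces $f_\theta(B) = 0$. Therefore $f_\theta(A) = 1$, and $F_{\theta,\lceil d/2\rceil} \ge b_A > 0$. In both cases $o_\theta = \lceil d/2\rceil$.
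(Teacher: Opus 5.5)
Your proposal is correct and, once the exploratory detour is dropped, it is the paper's proof. The lower bound uses the distance argument to get $\gap(E) > 0$. The equality uses the pair $A \subseteq L$ and $B = A \cdot L$, which share a syndrome, so that $f_\theta(A) + f_\theta(B) = 1$; then $f_\theta(B) = 0$ forces $f_\theta(A) = 1$ for odd $d$, and $F_{\theta,d/2} > 0$ follows for even $d$. The computation of $\gap(A) \le 0$, the ``symmetric argument'' paragraph and the sentence about ``any fixed decoder's complement'' are unused and can be deleted.
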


Beverland et al.~\cite{beverland_fail_2025} prove this for the minimum-weight decoder, and Appendix~\ref{app:unconcatenated} describes the proof for the ML decoder at leading order and for a single logical observable. By Lemma~\ref{lem:min-weight}, when some logical operator of weight $d$ flips $\theta$, the leading coefficient $c_\theta$ only involves error configurations of weight $w = \lceil d/2 \rceil$. By Corollary~\ref{cor:gap-fraction}, it remains to compute $\gap(E)$, $\Lambda'_E$ and the leading coefficients for each of them. Every error configuration in a competing class with the same syndrome as $E$ is $E \cdot M$ for a non-trivial logical operator $M$, and every error configuration in the class of $E$ is $E \cdot g$ for a stabilizer $g \in S$.

\begin{definition}[Competing logicals]\label{def:competing-logicals}
For an error configuration $E$ of weight $w$, its competing logicals are $\mathcal{C}_E = \{M \in \mathcal{L} \cup \mathcal{N} : \wt(E \cdot M) \le w\}$.
\end{definition}

\begin{lemma}[Competing classes at the leading weight]\label{lem:competitors}
Let $E$ be an error configuration of weight $w$ with $\lambda_0 = \mathsf{cls}(E)$ and syndrome $s$. Then
\begin{enumerate}
    \item[(i)] every $M \in \mathcal{C}_E$ has $\supp(E) \subseteq \supp(M)$, and exactly one of the following holds:
    \begin{enumerate}[nosep]
        \item[(a)] $M \in \mathcal{L}$ and $E \subseteq M$, so that $\wt(E \cdot M) = \lfloor d/2 \rfloor$;
        \item[(b)] $d$ is odd, $M \in \mathcal{L}$, and $M$ differs from $E$ on exactly one qubit of $\supp(E)$, so that $\wt(E \cdot M) = w$;
        \item[(c)] $d$ is odd, $M \in \mathcal{N}$ and $E \subseteq M$, so that $\wt(E \cdot M) = w$;
    \end{enumerate}
    \item[(ii)] the error configurations of weight at most $w$ in competing classes at syndrome $s$ are exactly $E \cdot M$ for $M \in \mathcal{C}_E$, and distinct $M$ give distinct $E \cdot M$;
    \item[(iii)] if $\mathcal{C}_E \neq \emptyset$, then $\omega_{\neg\lambda_0}(s) = \min_{M \in \mathcal{C}_E} \wt(E \cdot M)$ and $\Lambda'_E = \{\lambda_0 \cdot \mathsf{cls}(M) : M \in \mathcal{C}_E,\ \wt(E \cdot M) = \omega_{\neg\lambda_0}(s)\}$, and for each $\lambda' \in \Lambda'_E$, the leading coefficient is
    \begin{equation}\label{eq:competing}
    m_{\lambda'}(s) = \sum_{\substack{M \in \mathcal{C}_E,\ \wt(E \cdot M) = \omega_{\neg\lambda_0}(s) \\ \lambda_0 \cdot \mathsf{cls}(M) = \lambda'}} b_{E \cdot M};
    \end{equation}
    \item[(iv)] if $\mathcal{C}_E = \emptyset$, then $f_\theta(E) = 0$ for every logical observable $\theta$ and \emph{sufficiently small} $p$.
\end{enumerate}
\end{lemma}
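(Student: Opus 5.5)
The plan is to reduce everything to a weight-counting argument on $E$ and $M$, and then read off (ii)–(iv) from Definition~\ref{def:signed-gap} and Corollary~\ref{cor:gap-fraction}. Throughout, $w = \lceil d/2 \rceil$ is the leading weight of Lemma~\ref{lem:min-weight}; this is implicit in the statement, and in particular $2w \le d+1$, with $2w = d$ for even $d$ and $2w = d+1$ for odd $d$.

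\textbf{Part (i).} Fix a nontrivial logical $M$ with $\wt(E\cdot M) \le w$. Partition $\supp(E)\cup\supp(M)$ into four sets of sizes $a$, $b$, $c$, $e$:
\begin{itemize}[nosep]
\item $a = |\supp(E)\setminus\supp(M)|$;
\item $b$ is the number of qubits in $\supp(E)\cap\supp(M)$ with $E_i = M_i$;
\item $c$ is the number of qubits in $\supp(E)\cap\supp(M)$ with $E_i \neq M_i$;
\item $e = |\supp(M)\setminus\supp(E)|$.
\end{itemize}
Then
\[
w = a+b+c, \qquad \wt(M) = b+c+e, \qquad \wt(E\cdot M) = a+c+e .
\]
The hypothesis $\wt(E\cdot M)\le w$ gives $e \le b$. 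Hence $d \le \wt(M) \le 2b + c \le 2(w-a) - c$. Combined with $2w \le d+1$, this yields $2a + c \le 1$, so $a = 0$, which is the support containment, and $c \le 1$.

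If $c = 0$, then $E \subseteq M$ and $\wt(E\cdot M) = \wt(M) - w$. For $\wt(M) = d$ this equals $\lfloor d/2\rfloor$, which is case (a). For $\wt(M) = d+1$ it equals $w$ only when $d$ is odd, which is case (c); for even $d$ it exceeds $w$ and so is excluded. If $c = 1$, the chain forces $d \le 2w - 1$, so $d$ is odd. It also forces $\wt(M) \le 2(w-1)+1 = d$, so $M \in \mathcal{L}$, and then $\wt(E\cdot M) = 1 + e = 1 + d - w = w$, which is case (b). The three cases are mutually exclusive: (a) and (b) are distinguished by $c$, and (c) by $\wt(M)$. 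The same inequality $\wt(M) \le \wt(E) + \wt(E\cdot M) \le 2w \le d+1$ shows that no nontrivial logical of weight above $d+1$ can qualify. This is what lets us restrict attention to $\mathcal{L}\cup\mathcal{N}$.

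\textbf{Part (ii).} Let $F$ be an error configuration at syndrome $s$ with $\mathsf{cls}(F) \ne \lambda_0$ and $\wt(F) \le w$. Set $M = E\cdot F$ with the phase dropped. Then $\syn(M) = 0$, and $\mathsf{cls}(M) = \lambda_0 \cdot \mathsf{cls}(F)$ is nontrivial, so $M$ is a nontrivial logical. The triangle inequality above gives $\wt(M) \le d+1$, so $M \in \mathcal{L}\cup\mathcal{N}$, and hence $M \in \mathcal{C}_E$. Conversely, every $E\cdot M$ with $M\in\mathcal{C}_E$ has syndrome $s$ and class $\lambda_0\cdot\mathsf{cls}(M) \neq \lambda_0$. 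Distinctness holds because $M$ is recovered from $E\cdot M$ as $E\cdot(E\cdot M)$, up to phase.

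\textbf{Part (iii).} Since $E$ itself lies in class $\lambda_0$, we have $\omega_{\lambda_0}(s) \le w$. By (ii), every configuration in a competing class has weight either $\wt(E\cdot M)$ for some $M \in \mathcal{C}_E$, or strictly more than $w$. So when $\mathcal{C}_E \neq \emptyset$, the minimum over competing classes is attained exactly on the stated $E\cdot M$. Grouping these by class $\lambda_0\cdot\mathsf{cls}(M)$ and summing $b_{E\cdot M}$ gives $\Lambda'_E$ and Equation~\eqref{eq:competing}, using Equation~\eqref{eq:weight-expansion}.

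\textbf{Part (iv).} If $\mathcal{C}_E = \emptyset$, then every competing class has leading order greater than $w \ge \omega_{\lambda_0}(s)$. Hence $\gap(E) > 0$, and Corollary~\ref{cor:gap-fraction} gives $f_\theta(E) = 0$.

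The main obstacle is the inequality bookkeeping in (i): obtaining the exact trichotomy, including exclusivity and the exact value of $\wt(E\cdot M)$ in each case, from the single chain $d \le 2b + c$, $b + c \le w - a$, $2w \le d+1$. A secondary care point is confirming that $\mathcal{L}$ and $\mathcal{N}$ consist of nontrivial logicals, so that stabilizers of weight $\le d+1$ are correctly excluded.
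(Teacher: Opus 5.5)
Your proof is correct and follows the paper's argument. Your counts $a$ and $c$ are exactly the paper's $\alpha(E, E\cdot M)$ and $\beta(E, E\cdot M)$, so your inequality $2a + c \le 2w - d \le 1$ is the paper's $2\alpha + \beta \le 2w - d$, from which you derive the same trichotomy; your parts (ii)--(iv) likewise match the paper's distance-argument proofs, and your remark that $\mathcal{L}$ and $\mathcal{N}$ must consist of non-trivial logicals usefully makes explicit what the paper leaves implicit.
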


\begin{lemma}[True class at the leading weight]\label{lem:true-class}
Let $E$ be an error configuration of weight $w$ with $\lambda_0 = \mathsf{cls}(E)$ and syndrome $s$. Then every error configuration of weight at most $w$ with syndrome $s$ and class $\lambda_0$ is $E \cdot g$ for a stabilizer $g \in S$ of weight at most $2w \le d + 1$. Hence, $\omega_{\lambda_0}(s)$ is the smallest $\wt(E \cdot g)$ over the stabilizers $g$ of weight at most $d + 1$, and
\begin{equation}\label{eq:true-class}
m_{\lambda_0}(s) = \sum_{\substack{g \in S,\ \wt(g) \le d + 1 \\ \wt(E \cdot g) = \omega_{\lambda_0}(s)}} b_{E \cdot g}.
\end{equation}
\end{lemma}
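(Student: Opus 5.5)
The plan is to work directly from the characterization of the class of $E$ at syndrome $s$. Let $F$ be any error configuration with $\syn(F) = s$, $\mathsf{cls}(F) = \lambda_0$ and $\wt(F) \le w$. Since $F$ and $E$ share both syndrome and class, the background section tells us that $F = E \cdot g$ up to phase for some stabilizer $g \in S$. Set $g = E \cdot F$, up to phase. Then $\supp(g) \subseteq \supp(E) \cup \supp(F)$, so $\wt(g) \le \wt(E) + \wt(F) \le 2w$.

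The one place that needs care is the bound $2w \le d + 1$. This is not automatic for arbitrary $w$. It rests on the standing assumption, inherited from Lemma~\ref{lem:min-weight}, that $w = \lceil d/2 \rceil$ is the leading weight. I will state this explicitly, since the lemma is only invoked at $w \le \lceil d/2 \rceil$, and then $2w \le 2\lceil d/2 \rceil \le d + 1$. If the statement is meant for general $w$, the second sentence should be read under the hypothesis $w \le \lceil d/2\rceil$, and I would flag that.

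With this in hand, the characterization of $\omega_{\lambda_0}(s)$ follows. The class contains $E$ itself, which has weight $w$, so $\omega_{\lambda_0}(s) \le w$. Hence every configuration attaining the leading order has weight at most $w$. By the first part, each such configuration is of the form $E \cdot g$ with $\wt(g) \le d + 1$. Conversely, every $E \cdot g$ with $g \in S$ has syndrome $s$ and class $\lambda_0$. So the minimum of $\wt(E \cdot g)$ over stabilizers of weight at most $d+1$ equals $\omega_{\lambda_0}(s)$.

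For Equation~\eqref{eq:true-class}, substitute into the definition of $m_{\lambda_0, \omega}(s)$ in Equation~\eqref{eq:weight-expansion}. The map $g \mapsto E \cdot g$ (phases dropped) is a bijection from $S$ modulo phase onto the error configurations in class $\lambda_0$ at syndrome $s$. Distinct $g$ give distinct configurations, and $S$ contains no nontrivial phase multiples of the identity because $-I \notin S$. The sum over configurations of weight $\omega_{\lambda_0}(s)$ therefore becomes a sum over $g$ with $\wt(E\cdot g) = \omega_{\lambda_0}(s)$. These $g$ automatically satisfy $\wt(g) \le d+1$, so restricting to such $g$ loses no terms.

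The main obstacle is the weight bookkeeping for $2w \le d+1$, which depends on fixing $w$ at the leading weight. The bijection also needs care: $g$ must be counted as an element of $S$, not up to a sign, so that no term is double-counted.
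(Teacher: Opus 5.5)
Your proposal is correct and matches the paper's proof. The paper also writes each same-syndrome, same-class configuration as $E \cdot g$ and bounds $\wt(g) \le \wt(E) + \wt(E \cdot g) \le 2w \le d+1$. It then notes that $\omega_{\lambda_0}(s) \le w$ because $E$ lies in the class.

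Your two additions are sound and only make explicit what the paper leaves implicit. First, the injectivity of $g \mapsto E \cdot g$ follows from $-I \notin S$. Second, $2w \le d+1$ relies on $w$ being the leading weight $\lceil d/2 \rceil$ fixed earlier in the section.
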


\begin{theorem}[Exact leading term for unconcatenated codes]\label{thm:exact}
Let $\theta$ be a logical observable that some logical operator of weight $d$ flips. Then, for \emph{sufficiently small} $p$, $o_\theta = w$ and
\begin{equation}\label{eq:exact-leading}
c_\theta = \sum_{\substack{\wt(E) = w \\ \mathcal{C}_E \neq \emptyset}} f_\theta(E)\, b_E,
\end{equation}
where $f_\theta(E)$ is given by Corollary~\ref{cor:gap-fraction}, with $\omega_{\neg\lambda_0}(s)$, $\Lambda'_E$ and the leading coefficients of its classes from Lemma~\ref{lem:competitors}, and $\omega_{\lambda_0}(s)$ and $m_{\lambda_0}(s)$ from Lemma~\ref{lem:true-class}.
\end{theorem}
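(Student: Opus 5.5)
The plan is to assemble Theorem~\ref{thm:exact} from the lemmas already stated, with $w = \lceil d/2 \rceil$ throughout. First I will invoke Lemma~\ref{lem:min-weight}. Because some logical operator of weight $d$ flips $\theta$, it gives $o_\theta = \lceil d/2 \rceil = w$ for \emph{sufficiently small} $p$. Then, by Definition~\ref{def:ler-leading} and Equation~\eqref{eq:ler-expansion},
\[
c_\theta = F_{\theta,w} = \sum_{\wt(E) = w} f_\theta(E)\, b_E .
\]
The role of ``sufficiently small'' needs stating carefully here. By Corollary~\ref{cor:gap-fraction}, each $f_\theta(E)$ is constant in $p$ for $p < p_0^*$. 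So the coefficients $F_{\theta,w}$ are genuine constants, and the leading term is well defined by Lemma~\ref{lem:leading-terms}~(ii) and~(iii).

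Next I will restrict the sum to configurations with $\mathcal{C}_E \neq \emptyset$. Lemma~\ref{lem:competitors}~(iv) gives $f_\theta(E) = 0$ whenever $\mathcal{C}_E = \emptyset$, so those terms drop out and Equation~\eqref{eq:exact-leading} follows as an identity.

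It remains to justify that $f_\theta(E)$ is computable from the listed data when $\mathcal{C}_E \neq \emptyset$. Corollary~\ref{cor:gap-fraction} expresses $f_\theta(E)$ in terms of four ingredients:
\begin{enumerate}[nosep]
\item $\gap(E) = \omega_{\neg\lambda_0}(s) - \omega_{\lambda_0}(s)$;
\item the set $\Lambda'_E$;
\item the ratios $\rho_\lambda(E)$ for $\lambda \in \Lambda'_E$, which determine $\rho_{\neg\lambda_0}(E)$ and $\Lambda_E$;
\item for each $\lambda \in \Lambda_E$, whether $\lambda\cdot\lambda_0$ flips $\theta$.
\end{enumerate}
Lemma~\ref{lem:competitors}~(iii) supplies $\omega_{\neg\lambda_0}(s)$, $\Lambda'_E$ and $m_{\lambda'}(s)$ for each $\lambda' \in \Lambda'_E$ via Equation~\eqref{eq:competing}. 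Lemma~\ref{lem:true-class} supplies $\omega_{\lambda_0}(s)$ and $m_{\lambda_0}(s)$ via Equation~\eqref{eq:true-class}. The ratios $\rho_\lambda(E)$ are then quotients of these coefficients, and $\Lambda_E$ follows. Finally, $\lambda' = \lambda_0\cdot\mathsf{cls}(M)$, so $\lambda'\cdot\lambda_0 = \mathsf{cls}(M)$ because $\lambda_0\cdot\lambda_0$ is trivial. Whether $\lambda'\cdot\lambda_0$ flips $\theta$ is therefore read off from the commutation of $M$ with $\theta$.

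The main subtlety is the interplay between the two lemmas. Lemma~\ref{lem:competitors}~(iii) only describes competing configurations of weight at most $w$. Its use is valid because $\omega_{\neg\lambda_0}(s) \le w$ whenever $\mathcal{C}_E \neq \emptyset$, by part~(i), and any competing class of leading order greater than $w$ lies outside $\Lambda'_E$. Likewise, $\omega_{\lambda_0}(s) \le \wt(E) = w$ (take $g = I$), so Lemma~\ref{lem:true-class} captures all of the class-$\lambda_0$ contributions at leading order. With these checks, every quantity entering Corollary~\ref{cor:gap-fraction} equals the one computed by the lemmas, which yields the theorem.
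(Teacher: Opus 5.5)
Your proposal is correct and follows the paper's proof: Lemma~\ref{lem:min-weight} gives $o_\theta = w$, Definition~\ref{def:ler-leading} gives $c_\theta = F_{\theta,w}$, Lemma~\ref{lem:competitors}~(iv) removes the terms with $\mathcal{C}_E = \emptyset$, and Lemma~\ref{lem:competitors}~(iii) together with Lemma~\ref{lem:true-class} supply the inputs of Corollary~\ref{cor:gap-fraction}. Your extra checks, namely that $\omega_{\neg\lambda_0}(s) \le w$ and $\omega_{\lambda_0}(s) \le w$ (so the weight-$\le w$ descriptions capture the leading terms) and that flipping $\theta$ is read off from $M$, make explicit what the paper leaves implicit.
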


\begin{proof}
By Lemma~\ref{lem:min-weight}, $o_\theta = w$, so by Definition~\ref{def:ler-leading}, $c_\theta = F_{\theta,w}$, which sums $f_\theta(E)\, b_E$ over all error configurations of weight $w$. By Lemma~\ref{lem:competitors}~(iv), the terms with $\mathcal{C}_E = \emptyset$ vanish. For the others, Corollary~\ref{cor:gap-fraction} gives $f_\theta(E)$ from $\gap(E) = \omega_{\neg\lambda_0}(s) - \omega_{\lambda_0}(s)$, $\Lambda'_E$ and the leading coefficients, which Lemma~\ref{lem:competitors}~(iii) and Lemma~\ref{lem:true-class} give.
\end{proof}

Several of these notions follow fail fast~\cite{beverland_fail_2025}: at leading order, the ML decoder decides as its max-class min-weight decoder, and the failure fraction and Equation~\eqref{eq:ler-expansion} are its failure indicator and failure spectrum, resolved per logical observable and with the biases as weights. Case~(a) of Lemma~\ref{lem:competitors} is its Proposition~1 for even $d$, and cases~(a) and~(c) are in its Appendix~A.6 for odd $d$. No step of this subsection uses $k = 1$ or equal biases, so Theorem~\ref{thm:exact} holds for any number of logical qubits and any positive per-qubit biases. Its inputs are $\mathcal{L}$, $\mathcal{N}$ and the stabilizers of weight at most $d + 1$, which we enumerate once per code. Evaluating Equation~\eqref{eq:exact-leading} one error configuration at a time requires listing at least $|\mathcal{L}| \binom{d}{w}$ of them, and Section~\ref{sec:counting} avoids most of this listing.

\subsection{Counting Each Error Configuration Once}\label{sec:counting}
By Lemma~\ref{lem:competitors}~(i), every error configuration in Equation~\eqref{eq:exact-leading} is obtained from some $M \in \mathcal{L} \cup \mathcal{N}$ by keeping $w$ qubits of $\supp(M)$, with at most one of them carrying a different Pauli. Listing these from every such $M$ counts each error configuration once for every logical operator that produces it. For the error configurations inside a minimum-weight logical operator, we avoid both the listing and this overcount.

\begin{definition}[Enclosed error configurations]\label{def:enclosing}
An error configuration $E$ of weight $w$ is \emph{enclosed} if $E \subseteq M$ for some $M \in \mathcal{L}$, following the enclosure terminology of~\cite{beverland_fail_2025}. Its enclosing logicals are $\mathcal{L}_E = \{M \in \mathcal{L} : E \subseteq M\}$.
\end{definition}

\begin{lemma}[Classes at an enclosed configuration]\label{lem:enclosed-classes}
Let $E$ be an \emph{enclosed} error configuration with $\lambda_0 = \mathsf{cls}(E)$ and syndrome $s$. Then
\begin{enumerate}
    \item[(i)] $\omega_{\lambda_0}(s) = w$, $\omega_{\neg\lambda_0}(s) = \lfloor d/2 \rfloor$ and $\Lambda'_E = \{\lambda_0 \cdot \mathsf{cls}(M) : M \in \mathcal{L}_E\}$, so $\gap(E) = \lfloor d/2 \rfloor - w$, which is $0$ for even $d$ and $-1$ for odd $d$;
    \item[(ii)] for odd $d$, all $M \in \mathcal{L}_E$ have the same class, and for \emph{sufficiently small} $p$, $f_\theta(E) = 1$ when they flip $\theta$ and $f_\theta(E) = 0$ otherwise.
\end{enumerate}
\end{lemma}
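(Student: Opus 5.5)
For part~(i), the plan is to read everything off Lemmas~\ref{lem:competitors} and~\ref{lem:true-class}, which are already in hand. First, the true class. Since $E$ has weight $w = \lceil d/2\rceil$, Lemma~\ref{lem:min-weight} says it is corrected to leading order, so no class can have a configuration of weight below $w$ in $\lambda_0$... but that is the wrong direction, so I will argue directly. Suppose some $E\cdot g$ with $g \in S$ has weight $w' < w$. Take $M \in \mathcal{L}_E$. Then $E\cdot M$ has weight $\lfloor d/2\rfloor$, since $E \subseteq M$. The product $(E\cdot g)\cdot(E\cdot M) = g M$ is a non-trivial logical of weight at most $w' + \lfloor d/2\rfloor < \lceil d/2\rceil + \lfloor d/2\rfloor = d$. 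This contradicts the distance, so $\omega_{\lambda_0}(s) = w$.

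Next, the competing classes. $\mathcal{C}_E$ contains $\mathcal{L}_E \neq \emptyset$. By Lemma~\ref{lem:competitors}~(i), each $M \in \mathcal{C}_E$ gives $\wt(E\cdot M)$ equal to $\lfloor d/2\rfloor$ in case~(a) and $w$ in cases~(b),~(c). Case~(a) is exactly $M \in \mathcal{L}_E$. By Lemma~\ref{lem:competitors}~(iii), the minimum is therefore $\omega_{\neg\lambda_0}(s) = \lfloor d/2\rfloor$. For even $d$ the cases coincide at weight $w = \lfloor d/2\rfloor$, since cases~(b),~(c) do not occur. So $\Lambda'_E = \{\lambda_0\cdot\mathsf{cls}(M) : M \in \mathcal{L}_E\}$, and $\gap(E) = \lfloor d/2\rfloor - w \in \{0,-1\}$.

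For part~(ii), take odd $d$ and $M, M' \in \mathcal{L}_E$. Then $M M' = (E\cdot M)\cdot(E\cdot M')$, up to phase, has weight at most $2\lfloor d/2\rfloor = d-1 < d$. It has trivial syndrome, so it must be a stabilizer. Hence $\mathsf{cls}(M) = \mathsf{cls}(M')$, and $\Lambda'_E$ is a single class $\lambda_0\cdot\mathsf{cls}(M)$. Then $\Lambda_E = \Lambda'_E$. Since $\gap(E) = -1 < 0$, Corollary~\ref{cor:gap-fraction} gives $f_\theta(E) = |\Lambda_E^\theta|/1$. Here $\lambda\cdot\lambda_0 = \mathsf{cls}(M)$, because $\lambda_0\cdot\lambda_0$ is trivial. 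So $f_\theta(E)$ is $1$ exactly when $\mathsf{cls}(M)$ flips $\theta$, and $0$ otherwise.

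The main obstacle is the lower bound $\omega_{\lambda_0}(s) \ge w$ in part~(i). It needs the triangle argument through a weight-$\lfloor d/2\rfloor$ competitor. The key is to check that $gM$ is non-trivial, which holds because $M \notin S$ and $g \in S$.
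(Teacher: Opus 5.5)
Your proof is correct and follows essentially the same route as the paper. You get the lower bound $\omega_{\lambda_0}(s) \ge w$ from the distance argument against the weight-$\lfloor d/2 \rfloor$ competitor $E \cdot M$, phrased by contradiction with $F = E \cdot g$, which is the same step. You read $\omega_{\neg\lambda_0}(s)$ and $\Lambda'_E$ off Lemma~\ref{lem:competitors}~(i) and~(iii). For odd $d$, you use $\wt(M \cdot M') \le d - 1$ to force equal classes, then apply Corollary~\ref{cor:gap-fraction} with $\gap(E) = -1$. The aborted opening appeal to Lemma~\ref{lem:min-weight} is harmless and can simply be deleted.
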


Different \emph{enclosed} error configurations can have the same enclosing logicals. To compute the failure fraction once for all of them, we group them by the Pauli operator that all their enclosing logicals share.

\begin{definition}[Closure]\label{def:closure}
For Pauli operators $M$ and $M'$, let $M \cap M'$ be the Pauli operator that acts as $P$ on each qubit where both carry the same non-identity Pauli $P$, and as the identity elsewhere. For an \emph{enclosed} error configuration $E$, its closure is $J_E = \bigcap_{M \in \mathcal{L}_E} M$.
\end{definition}

For example, if $d = 4$, $E = X_1X_2$ and $\mathcal{L}_E = \{X_1X_2X_3Z_4,\ X_1X_2X_3Z_5\}$, then $J_E = X_1X_2X_3$, since the two enclosing logicals carry the same Pauli on qubits $1$ to $3$ but not on qubits $4$ and $5$.

Since every $M \in \mathcal{L}_E$ acts as $E$ on $\supp(E)$, and so does every $M \in \mathcal{L}$ with $J_E \subseteq M$, we have,
\begin{equation}\label{eq:closure-logicals}
E \subseteq J_E, \qquad \mathcal{L}_E = \{M \in \mathcal{L} : J_E \subseteq M\}.
\end{equation}

Taking every error configuration $E \subseteq M$ of weight $w$ for every $M \in \mathcal{L}$ counts an \emph{enclosed} error configuration $E$ once for each $M \in \mathcal{L}_E$. We instead count each $E$ once, at its closure $J_E$, using M\"obius inversion~\cite{rota_foundations_1964} over the set $\mathcal{J}$ of intersections $\bigcap_{M \in \mathcal{L}'} M$ over non-empty $\mathcal{L}' \subseteq \mathcal{L}$ that have weight at least $w$, ordered by $\subseteq$. Every closure $J_E = \bigcap_{M \in \mathcal{L}_E} M$ is in $\mathcal{J}$, since $E \subseteq J_E$ gives $\wt(J_E) \ge w$.

\begin{definition}[M\"obius function]\label{def:mobius}
For $K, J \in \mathcal{J}$ with $K \subseteq J$, the M\"obius function is
\begin{equation}\label{eq:mobius}
\mu(K, J) = [K = J] - \sum_{\substack{J' \in \mathcal{J} \\ K \subsetneq J' \subseteq J}} \mu(J', J),
\end{equation}
where $[K = J]$ is $1$ if $K = J$ and $0$ otherwise.
\end{definition}

\begin{lemma}[Counting enclosed configurations once]\label{lem:mobius-count}
Let $\varphi$ assign a real number $\varphi(J, E)$ to every $J \in \mathcal{J}$ and every error configuration $E \subseteq J$ of weight $w$. Then
\begin{equation}\label{eq:mobius-count}
\sum_{E\in \{E:\ E\text{ is \emph{enclosed}}\}} \varphi(J_E, E) = \sum_{J \in \mathcal{J}} \; \sum_{\substack{J' \in \mathcal{J} \\ J' \subseteq J}} \mu(J', J) \left( \sum_{\substack{E \subseteq J' \\ \wt(E) = w}} \varphi(J, E) \right).
\end{equation}
\end{lemma}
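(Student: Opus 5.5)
The plan is to expand the right-hand side of Equation~\eqref{eq:mobius-count} by swapping the order of summation so that the outer sum runs over error configurations, and then show that the Möbius coefficients collapse to the indicator that $J$ is the closure of $E$.

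\textbf{Step 1: swap the summation order.} For fixed $J \in \mathcal{J}$ and a fixed error configuration $E$ of weight $w$ with $E \subseteq J$, collect the terms in which $E$ appears. These are the $J' \in \mathcal{J}$ with $E \subseteq J' \subseteq J$, each weighted by $\mu(J', J)$. The right-hand side is therefore
\begin{equation*}
\sum_{J \in \mathcal{J}} \; \sum_{\substack{E \subseteq J \\ \wt(E) = w}} \varphi(J, E) \sum_{\substack{J' \in \mathcal{J} \\ E \subseteq J' \subseteq J}} \mu(J', J).
\end{equation*}
Every such $E$ is enclosed. Indeed, $J$ is an intersection $\bigcap_{M \in \mathcal{L}'} M$, and $E \subseteq J$ implies $E \subseteq M$ for every $M \in \mathcal{L}'$, which is nonempty.

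\textbf{Step 2: identify the lower end of the inner range.} I claim that $\{J' \in \mathcal{J} : E \subseteq J'\} = \{J' \in \mathcal{J} : J_E \subseteq J'\}$. The inclusion $\supseteq$ follows from $E \subseteq J_E$ in Equation~\eqref{eq:closure-logicals}.

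For $\subseteq$, write $J' = \bigcap_{M \in \mathcal{L}'} M$. Then $E \subseteq J'$ gives $E \subseteq M$ for each $M \in \mathcal{L}'$, so $\mathcal{L}' \subseteq \mathcal{L}_E$. Intersecting over the larger set $\mathcal{L}_E$ yields a Pauli operator that is $\subseteq$ the intersection over $\mathcal{L}'$, hence $J_E \subseteq J'$.

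One detail needs care here: $\cap$ and $\subseteq$ must interact correctly. Specifically, $A \subseteq M$ for all $M$ in a set is equivalent to $A \subseteq \bigcap M$, which holds directly from Definition~\ref{def:closure}. Since $J_E \in \mathcal{J}$, the inner sum becomes $\sum_{J_E \subseteq J' \subseteq J} \mu(J', J)$.

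\textbf{Step 3: apply the defining recursion.} Rearranging Definition~\ref{def:mobius} gives
\begin{equation*}
\sum_{K \subseteq J' \subseteq J} \mu(J', J) = [K = J]
\end{equation*}
for all $K \subseteq J$ in $\mathcal{J}$. With $K = J_E$, the inner sum equals $[J_E = J]$. This requires $J_E \subseteq J$, which is exactly what Step 2 gives, because $E \subseteq J$ and $J \in \mathcal{J}$.

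\textbf{Step 4: conclude.} The right-hand side reduces to $\sum_J \sum_{E \subseteq J,\, \wt(E) = w} [J_E = J]\,\varphi(J, E)$. Every enclosed $E$ satisfies $E \subseteq J_E$ with $J_E \in \mathcal{J}$, so each enclosed $E$ is counted exactly once, at $J = J_E$. This gives the left-hand side.

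The main obstacle is Step 2. It is where one must verify that the order $\subseteq$ on Pauli operators, as opposed to the order on supports, makes $J_E$ the least element of $\mathcal{J}$ above $E$. That argument relies on the fact that agreeing Paulis on the support of $E$ forces $E \subseteq M \cap M'$.
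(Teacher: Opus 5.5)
Your proposal is correct and follows essentially the same route as the paper. Both arguments rest on the equivalence $E \subseteq J' \iff J_E \subseteq J'$ for $J' \in \mathcal{J}$, obtained from $\mathcal{L}' \subseteq \mathcal{L}_E$. Both then apply the M\"obius recursion with $K = J_E$ to get $[J_E = J]$ and exchange the sums over $E$ and $J'$. The only difference is direction: you start from the right-hand side and the paper from the left.
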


To count the failures of the \emph{enclosed} error configurations, we apply Lemma~\ref{lem:mobius-count} with $\varphi(J, E) = f^J_\theta(E)\, b_E$, where $f^J_\theta(E)$ is given by Corollary~\ref{cor:gap-fraction}, with $\Lambda'_E$ and its leading coefficients computed by Lemmas~\ref{lem:enclosed-classes}~(i) and~\ref{lem:competitors}~(iii) from $\mathcal{L}_J = \{M \in \mathcal{L} : J \subseteq M\}$ in place of $\mathcal{L}_E$, and $m_{\lambda_0}(s)$ from Lemma~\ref{lem:true-class}. By Equation~\eqref{eq:closure-logicals}, $\mathcal{L}_{J_E} = \mathcal{L}_E$, so $\varphi(J_E, E) = f_\theta(E)\, b_E$. For odd $d$, Lemma~\ref{lem:enclosed-classes}~(ii) gives $f^J_\theta(E)$ directly, while for even $d$, $f^J_\theta(E)$ depends on $m_{\lambda_0}(s)$, which can differ between \emph{enclosed} error configurations with the same closure. Section~\ref{sec:dp} evaluates the inner sums of Equation~\eqref{eq:mobius-count} without listing their error configurations.

The remaining error configurations of Equation~\eqref{eq:exact-leading} are not \emph{enclosed}, exist only for odd $d$, and we list them directly. For each $M \in \mathcal{L}$, we take every set of $w$ qubits of $\supp(M)$, keep the Pauli of $M$ on all but one of them, and give the remaining qubit each of the two other non-identity Paulis. For each $N \in \mathcal{N}$, we take every error configuration $E \subseteq N$ of weight $w$. We keep each distinct error configuration once, discard the \emph{enclosed} ones, and sum $f_\theta(E)\, b_E$ over the rest, with $f_\theta(E)$ as in Theorem~\ref{thm:exact}.

\section{Exact Leading-Order LER for Concatenated Codes}\label{sec:concatenated}
Our main result for concatenated codes, Theorem~\ref{thm:recursion}, extends Theorem~\ref{thm:exact} from a single code block to every level. The soft information that a block passes to its parent has a distribution whose leading terms are determined by those of its child blocks. Starting from the physical qubits, this gives the exact leading term of the LER at every level. In a concatenated code, the tree decoder of Section~\ref{sec:tree-decoder} passes the class likelihoods of each child block to its parent block, which uses them to decode its own syndrome. Thus, the LER of a parent block depends on how often each child block produces each value of these likelihoods. Section~\ref{sec:soft-distribution} defines this as the soft-information distribution of a block, and Section~\ref{sec:recursion} computes the distribution of a parent block from those of its children and gives the leading order of the LER in closed form. For simplicity, every block in this section has a single logical qubit, i.e., $k = 1$, so that by Theorem~\ref{thm:tree-ml} the tree decoder is the ML decoder of the whole concatenated code. Theorem~\ref{thm:recursion} also holds for $k > 1$ under interleaved concatenation, with the soft information of each logical qubit in place of that of the block. No block then contains two logical qubits of the same block below it, so the inputs of every block remain independent, but Theorem~\ref{thm:tree-ml} no longer applies. The proofs of this section are in Appendix~\ref{app:concatenated}.

\subsection{Soft-Information Distribution}\label{sec:soft-distribution}
In the tree of Section~\ref{sec:tree-decoder}, we call a physical qubit a block at level $0$, and every block at level $\ell \ge 1$ encodes its logical qubit using $n$ qubits. Together with the blocks below it, a block at level $\ell$ is a stabilizer code with one logical qubit on the $N = n^{\ell}$ physical qubits below it. Hence, Section~\ref{sec:leading-order} and Corollary~\ref{cor:gap-fraction} apply to it with $N$ in place of $n$. For $P \in \{X, Y, Z\}$, let $\bar P$ be a logical operator of the block that acts as $P$ on its logical qubit. The competing classes of an error configuration $E$ with $\lambda_0 = \mathsf{cls}(E)$ are then the three classes $\lambda_0 \cdot \mathsf{cls}(\bar P)$. We now describe, at leading order, the soft information that a block passes to its parent in the tree decoder of Section~\ref{sec:tree-decoder}.

\begin{definition}[Soft information]\label{def:soft-information}
Let $E$ be an error configuration of a block with $\lambda_0 = \mathsf{cls}(E)$. For $P \in \{X, Y, Z\}$, let $\gap_P(E)$ and $\rho_P(E)$ be the signed gap and the coefficient ratio of Definition~\ref{def:signed-gap} at the competing class $\lambda_0 \cdot \mathsf{cls}(\bar P)$. The soft information of $E$ is the triple $\sigma(E) = \left( \gap_P(E), \rho_P(E) \right)_{P \in \{X, Y, Z\}}$.
\end{definition}

By Equation~\eqref{eq:class-ratio}, the soft information gives the leading term of each likelihood ratio that the block passes up, i.e., for every $P \in \{X, Y, Z\}$,
\begin{equation}\label{eq:soft-ratio}
\frac{W_{\lambda_0 \cdot \mathsf{cls}(\bar P)}(s)}{W_{\lambda_0}(s)} \sim \rho_P(E)\, p^{\gap_P(E)},
\end{equation}
where $s$ is the syndrome of $E$. The soft information also decides whether the block itself fails. For \emph{sufficiently small} $p$, the failure fraction $f_\theta(E)$ of a logical observable $\theta$ depends on $E$ only through $\sigma(E)$. Indeed, Corollary~\ref{cor:gap-fraction} reads the classes it compares off the signed gaps and coefficient ratios in $\sigma(E)$, and returning the class $\lambda_0 \cdot \mathsf{cls}(\bar P)$ fails for $\theta$ exactly when $\bar P$ flips $\theta$. For a value $\tau$ of the soft information, we write $f_\theta(\tau)$ for the value of $f_\theta(E)$ when $\sigma(E) = \tau$. Hence, the LER depends only on how likely each value of the soft information is.

\begin{definition}[Soft-information distribution]\label{def:soft-distribution}
For a value $\tau$ of the soft information of a block, let $\Pr(\sigma(E) = \tau)$ be the probability that the error configuration $E$ of the block has soft information $\tau$. The soft-information distribution $\mathcal{D}$ of the block gives, for each such $\tau$, the leading order and the leading coefficient of this probability, $\mathcal{D}(\tau) = \left( o(\tau), q(\tau) \right)$, where
\begin{equation}\label{eq:soft-distribution}
o(\tau) = \min_{\sigma(E) = \tau} \wt(E), \qquad q(\tau) = \sum_{\substack{\sigma(E) = \tau \\ \wt(E) = o(\tau)}} b_E.
\end{equation}
\end{definition}

By Equation~\eqref{eq:configuration-prob}, $\Pr(E) \sim b_E\, p^{\wt(E)}$. Since $\Pr(\sigma(E) = \tau)$ is the sum of $\Pr(E)$ over the error configurations $E$ with $\sigma(E) = \tau$, Lemma~\ref{lem:leading-terms}~(i) and~(ii) give,
\begin{equation}\label{eq:soft-leading}
\Pr(\sigma(E) = \tau) \sim q(\tau)\, p^{o(\tau)}.
\end{equation}

\sloppy Grouping the error configurations in Equation~\eqref{eq:ler-expansion} by their soft information gives $\LER_\theta(p) = \sum_{\tau \,:\, f_\theta(\tau) > 0} f_\theta(\tau) \Pr(\sigma(E) = \tau)$. By Equation~\eqref{eq:soft-leading} and Lemma~\ref{lem:leading-terms}, the leading order and the leading coefficient of the LER of a logical observable $\theta$ are then, for \emph{sufficiently small} $p$,
\begin{equation}\label{eq:soft-ler}
o_\theta = \min \{ o(\tau) : f_\theta(\tau) > 0 \}, \qquad c_\theta = \sum_{\tau \,:\, o(\tau) = o_\theta} f_\theta(\tau)\, q(\tau).
\end{equation}

A physical qubit $i$ is a block with $N = 1$ whose only stabilizer is the identity. Its syndrome is therefore always trivial, and $\bar P$ is $P$ itself. Each error configuration $Q \in \{I, X, Y, Z\}$ defines a logical class by itself, so $W_{\mathsf{cls}(Q)}(s) = \Pr(Q)$. For $E = Q$, the class $\lambda_0 \cdot \mathsf{cls}(\bar P)$ is $\mathsf{cls}(QP)$. The likelihood ratio in Equation~\eqref{eq:soft-ratio} is then $\Pr(QP)/\Pr(Q)$, with $\Pr(I) = 1 - p$ and $\Pr(Q') = b_{i,Q'}\, p$ for $Q' \neq I$. Its leading term gives,
\begin{equation}\label{eq:qubit-soft}
\left( \gap_P(Q), \rho_P(Q) \right) = \begin{cases} \left( 1, b_{i,P} \right) & \text{if } Q = I, \\ \left( -1, 1/b_{i,Q} \right) & \text{if } Q = P, \\ \left( 0, b_{i,QP}/b_{i,Q} \right) & \text{otherwise}. \end{cases}
\end{equation}
The four values $\sigma(I)$, $\sigma(X)$, $\sigma(Y)$ and $\sigma(Z)$ are distinct, so by Definition~\ref{def:soft-distribution}, the soft-information distribution of a physical qubit is $\mathcal{D}(\sigma(I)) = (0, 1)$ and $\mathcal{D}(\sigma(Q)) = (1, b_{i,Q})$ for $Q \in \{X, Y, Z\}$. Starting from these, Section~\ref{sec:recursion} computes the soft-information distribution of every block from those of its children.

\subsection{Exact Recursion}\label{sec:recursion}
We now show that the soft information of a parent block is a function of that of its child blocks (Lemma~\ref{lem:parent-soft}), and that, since the error configurations of the child blocks are independent, the soft-information distribution of the parent block follows from those of its child blocks (Theorem~\ref{thm:recursion}). The parent block encodes its logical qubit using $n$ qubits with a stabilizer code, and qubit $j$ comes from the logical qubit of child block $j$. Let $S$ be the stabilizer group of this code. For $P \in \{I, X, Y, Z\}$, let $\mathcal{M}_P$ be the set of logical operators $M$ of this code that act as $P$ on its logical qubit, so that $\mathcal{M}_I = S$.

Since the child blocks act on disjoint physical qubits, every error configuration of the parent block is a product $E = E^{(1)} \cdots E^{(n)}$, where $E^{(j)}$ is an error configuration of child block $j$. We write $\lambda_j = \mathsf{cls}(E^{(j)})$, $s^{(j)}$ for the syndrome of $E^{(j)}$, and $W^{(j)}$ for the class likelihoods of child block $j$. Let $\lambda_0 = \mathsf{cls}(E)$ and let $s$ be the syndrome of $E$. The errors on the qubits of the parent block with syndrome $s$ and class $\lambda_0 \cdot \mathsf{cls}(\bar P)$ are exactly the products of the $M \in \mathcal{M}_P$ with the error that applies, on each qubit $j$, the Pauli of class $\lambda_j$, and multiplying by $M$ changes the class of child block $j$ from $\lambda_j$ to $\lambda_j \cdot \mathsf{cls}(\bar{M_j})$. Equation~\eqref{eq:tree-likelihood} thus gives, for $P \in \{I, X, Y, Z\}$,
\begin{equation}\label{eq:parent-likelihood}
W_{\lambda_0 \cdot \mathsf{cls}(\bar P)}(s) = \sum_{M \in \mathcal{M}_P} \; \prod_{j=1}^{n} W^{(j)}_{\lambda_j \cdot \mathsf{cls}(\bar{M_j})}\left( s^{(j)} \right),
\end{equation}
where $\bar I$ is the identity, and $\bar{M_j}$ is the logical operator $\bar P$ of child block $j$ for $P = M_j$. By Theorem~\ref{thm:tree-ml}, these are the class likelihoods of the parent block as a code on its $N$ physical qubits.

\begin{lemma}[Soft information of a parent block]\label{lem:parent-soft}
Let $E = E^{(1)} \cdots E^{(n)}$ be an error configuration of a parent block. Then, for every $P \in \{X, Y, Z\}$,
\begin{equation}\label{eq:parent-soft}
\rho_P(E)\, p^{\gap_P(E)} \sim \frac{\sum_{M \in \mathcal{M}_P} \prod_{j \in \supp(M)} \rho_{M_j}(E^{(j)})\, p^{\gap_{M_j}(E^{(j)})}}{\sum_{g \in S} \prod_{j \in \supp(g)} \rho_{g_j}(E^{(j)})\, p^{\gap_{g_j}(E^{(j)})}}.
\end{equation}
Hence, $\sigma(E) = \Phi\left( \sigma(E^{(1)}), \ldots, \sigma(E^{(n)}) \right)$, where the map $\Phi$ reads $\gap_P(E)$ and $\rho_P(E)$ off the leading term of the right side of Equation~\eqref{eq:parent-soft} for each $P$, and depends only on the code of the parent block.
\end{lemma}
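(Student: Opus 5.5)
The plan is to start from Equation~\eqref{eq:parent-likelihood} and divide the likelihood of each competing class by that of the true class. Taking $P = I$ in Equation~\eqref{eq:parent-likelihood}, the true class has
\[
W_{\lambda_0}(s) = \sum_{g \in S} \prod_{j=1}^{n} W^{(j)}_{\lambda_j \cdot \mathsf{cls}(\bar{g_j})}\left( s^{(j)} \right).
\]
We divide both the numerator (for $\mathcal{M}_P$) and this denominator by the common positive quantity $\prod_{j=1}^{n} W^{(j)}_{\lambda_j}(s^{(j)})$. Each product then becomes $\prod_{j} W^{(j)}_{\lambda_j \cdot \mathsf{cls}(\bar{M_j})}(s^{(j)}) / W^{(j)}_{\lambda_j}(s^{(j)})$. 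The factors with $j \notin \supp(M)$ equal $1$, because $\bar I$ is the identity and leaves the class unchanged. The factors with $j \in \supp(M)$ are exactly the child likelihood ratios of Equation~\eqref{eq:soft-ratio} for the child block $j$ and the Pauli $M_j$. This gives an exact identity: $W_{\lambda_0 \cdot \mathsf{cls}(\bar P)}(s)/W_{\lambda_0}(s)$ equals the ratio of sums of products of child likelihood ratios.

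Next I will apply Equation~\eqref{eq:soft-ratio} to each child block, which is valid because each child is itself a single-logical-qubit stabilizer code on its own physical qubits. It gives $W^{(j)}_{\lambda_j\cdot\mathsf{cls}(\bar Q)}/W^{(j)}_{\lambda_j} \sim \rho_Q(E^{(j)})\,p^{\gap_Q(E^{(j)})}$. Lemma~\ref{lem:leading-terms}~(i) then lets me replace each factor by its asymptotic equivalent, first within products, then within the finite sums over $\mathcal{M}_P$ and $S$, and finally in the quotient. All quantities are positive and the sums are finite ($|S| = 2^{n-1}$), so the lemma applies directly. On the left-hand side, Equation~\eqref{eq:soft-ratio} for the parent block, viewed as a code on $N$ qubits, gives $\rho_P(E)p^{\gap_P(E)}$. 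This is where Theorem~\ref{thm:tree-ml} is used: it identifies the tree-decoder likelihoods with the class likelihoods of the parent block as a code on its $N$ physical qubits. Combining the two sides yields Equation~\eqref{eq:parent-soft}.

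For the second claim, I use Lemma~\ref{lem:leading-terms}~(ii) on the numerator and the denominator separately. Each is a finite sum of monomials $q_a p^{o_a}$ with $q_a>0$, so each is asymptotic to a single monomial whose order and coefficient are computed from the child triples $\sigma(E^{(j)})$ and the code alone. The quotient of two monomials is a monomial, which I combine with Lemma~\ref{lem:leading-terms}~(i). Lemma~\ref{lem:leading-terms}~(iii) then shows that $\gap_P(E)$ and $\rho_P(E)$ are uniquely determined as the exponent and coefficient of this monomial. This defines $\Phi$, which takes as inputs only the sets $\mathcal{M}_P$ and $S$ of the parent code. The one point needing care, and the step I expect to be the main obstacle, is the reduction to a common divisor. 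I must check that multiplying by $M$ maps the child class $\lambda_j$ to $\lambda_j\cdot\mathsf{cls}(\bar{M_j})$ independently of the chosen representative. I must also check that the ratio for child $j$ depends on $E^{(j)}$ only through $\sigma(E^{(j)})$, which holds because $\rho$ and $\gap$ were defined relative to $\lambda_j = \mathsf{cls}(E^{(j)})$. Once these are checked, the rest is the finite asymptotic bookkeeping described above.
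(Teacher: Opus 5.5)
Your proposal is correct and follows the paper's proof essentially step for step: divide Equation~\eqref{eq:parent-likelihood} for $P$ and for $P = I$ by $\prod_j W^{(j)}_{\lambda_j}(s^{(j)})$, so that the child blocks outside $\supp(M)$ contribute $1$. Then apply Equation~\eqref{eq:soft-ratio} to each child ratio and to the parent, and conclude with Lemma~\ref{lem:leading-terms}. Your explicit use of parts (ii) and (iii) of that lemma to define $\Phi$ just spells out the paper's one-line remark that the right side depends on $E$ only through $\sigma(E^{(1)}), \ldots, \sigma(E^{(n)})$.
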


Expanding the products in Equation~\eqref{eq:parent-soft}, each $M$ contributes a single power of $p$, whose exponent is its accumulated gap $\sum_{j \in \supp(M)} \gap_{M_j}(E^{(j)})$, with the product of the $\rho_{M_j}(E^{(j)})$ as its coefficient. By Lemma~\ref{lem:leading-terms}~(ii), $\gap_P(E)$ is therefore the smallest accumulated gap over $\mathcal{M}_P$ minus the smallest over $S$, and $\rho_P(E)$ is the ratio of the summed coefficients at these two smallest values.

\begin{theorem}[Exact recursion]\label{thm:recursion}
For a parent block, let $\mathcal{D}^{(j)}(\tau_j) = \left( o^{(j)}(\tau_j), q^{(j)}(\tau_j) \right)$ be the soft-information distribution of child block $j$. Then, for every value $\tau$ of the soft information of the parent block,
\begin{equation}\label{eq:recursion}
\begin{aligned}
\Pr(\sigma(E) = \tau) &= \sum_{\Phi(\tau_1, \ldots, \tau_n) = \tau} \; \prod_{j=1}^{n} \Pr\left( \sigma(E^{(j)}) = \tau_j \right), \\
q(\tau)\, p^{o(\tau)} &\sim \sum_{\Phi(\tau_1, \ldots, \tau_n) = \tau} \; \prod_{j=1}^{n} q^{(j)}(\tau_j)\, p^{o^{(j)}(\tau_j)},
\end{aligned}
\end{equation}
where each $\tau_j$ runs over the values of the soft information of child block $j$. In particular, $\mathcal{D}$ is determined by $\mathcal{D}^{(1)}, \ldots, \mathcal{D}^{(n)}$.
\end{theorem}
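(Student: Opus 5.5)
The plan is to prove the first line of Equation~\eqref{eq:recursion} as an exact identity, and then obtain the second line by taking leading terms with Lemma~\ref{lem:leading-terms}.

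\textbf{Exact identity.} Every error configuration of the parent block factors uniquely as $E = E^{(1)} \cdots E^{(n)}$, with the child blocks acting on disjoint physical qubits. Equation~\eqref{eq:configuration-prob} is a product over physical qubits, so $\Pr(E) = \prod_j \Pr(E^{(j)})$, i.e., the $E^{(j)}$ are independent. By Lemma~\ref{lem:parent-soft}, $\sigma(E) = \Phi(\sigma(E^{(1)}), \ldots, \sigma(E^{(n)}))$. I will therefore write
\[
\Pr(\sigma(E) = \tau) = \sum_{E \,:\, \sigma(E) = \tau} \prod_j \Pr(E^{(j)}),
\]
and group the terms by the tuple $(\tau_1, \ldots, \tau_n)$ with $\tau_j = \sigma(E^{(j)})$. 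Only tuples with $\Phi(\tau_1, \ldots, \tau_n) = \tau$ contribute. For a fixed tuple, the sum over $E$ factorizes into $\prod_j \sum_{\sigma(E^{(j)}) = \tau_j} \Pr(E^{(j)}) = \prod_j \Pr(\sigma(E^{(j)}) = \tau_j)$. This gives the first line.

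One subtlety must be handled before the grouping: $\sigma$ and $\Phi$ are defined through leading terms, so they must be well defined functions independent of $p$. Each child block has finitely many syndromes, and $\sigma(E^{(j)})$ depends only on $E^{(j)}$ through limits. Hence $\Phi$ is a fixed combinatorial map (minimal accumulated gaps and coefficient sums), and the identity holds for every $p$. It therefore also holds for sufficiently small $p$, which is all the later steps need.

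\textbf{Leading terms.} The sum in the first line is finite, since each child has finitely many soft-information values. Each factor satisfies $\Pr(\sigma(E^{(j)}) = \tau_j) \sim q^{(j)}(\tau_j)\, p^{o^{(j)}(\tau_j)}$ by Equation~\eqref{eq:soft-leading}, and these probabilities are positive. Lemma~\ref{lem:leading-terms}~(i) then gives asymptotic equivalence first of each product and then of the sum. The left side is asymptotic to $q(\tau)\, p^{o(\tau)}$, also by Equation~\eqref{eq:soft-leading}, which yields the second line.

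\textbf{Determination of $\mathcal{D}$.} By Lemma~\ref{lem:leading-terms}~(ii), the right side of the second line is asymptotic to $\bigl(\sum q\text{-products at minimal total order}\bigr)\, p^{\min \sum_j o^{(j)}}$. By Lemma~\ref{lem:leading-terms}~(iii), $o(\tau)$ and $q(\tau)$ are uniquely equal to this minimal exponent and coefficient. Both depend only on the $\mathcal{D}^{(j)}$ and on $\Phi$, and $\Phi$ depends only on the parent code. So $\mathcal{D}$ is determined by $\mathcal{D}^{(1)}, \ldots, \mathcal{D}^{(n)}$.

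\textbf{Main obstacle.} The main obstacle I expect is the well-definedness point: $\sigma(E)$ must genuinely be a function of the children's soft information alone. In particular, the map must not depend on the children's actual classes $\lambda_j$ or syndromes beyond $\sigma(E^{(j)})$. Lemma~\ref{lem:parent-soft} supplies exactly this, and I would lean on it. The only remaining check is that the normalization by $W^{(j)}_{\lambda_j}$ in Equation~\eqref{eq:parent-soft} cancels consistently between numerator and denominator, so that only the ratios in Equation~\eqref{eq:soft-ratio} appear.
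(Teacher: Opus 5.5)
Your proposal is correct and follows essentially the same route as the paper. The paper also uses the independence of the disjoint child blocks and Lemma~\ref{lem:parent-soft} to group the sum over $\sigma(E) = \tau$ by tuples of child values, and then Lemma~\ref{lem:leading-terms}~(i) with Equation~\eqref{eq:soft-leading} for the second line. Your use of Lemma~\ref{lem:leading-terms}~(ii) and~(iii) to read off $o(\tau)$ and $q(\tau)$ only makes explicit the ``determined by'' claim that the paper leaves implicit.
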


The proof uses that the child blocks are independent, and groups the error configurations of each child block by their soft information. Theorem~\ref{thm:recursion} needs only the distributions $\mathcal{D}^{(j)}$ of the child blocks, not their error configurations. Starting from the physical qubits, whose $\mathcal{D}$ is given at the end of Section~\ref{sec:soft-distribution}, it gives $\mathcal{D}$ for every block, one level at a time. By Equation~\eqref{eq:soft-ler}, this gives the exact leading term of the LER at every level, for \emph{sufficiently small} $p$.

For $P \in \{X, Y, Z\}$, let $d(P)$ be the smallest weight of a logical operator of the block that acts as $P$ on its logical qubit. Then $\gap_P(I) = d(P)$, and applying Lemma~\ref{lem:parent-soft} to $E = I$ gives,
\begin{equation}\label{eq:class-distance}
d(P) = \min_{M \in \mathcal{M}_P} \; \sum_{j \in \supp(M)} d^{(j)}(M_j),
\end{equation}
where $d^{(j)}$ are the class distances of child block $j$. As for unconcatenated codes in Lemma~\ref{lem:min-weight}, the leading order of the LER of $\theta$ is then $\lceil D_\theta/2 \rceil$, where $D_\theta = \min \{ d(P) : \bar P \text{ flips } \theta \}$, and no decoder has a higher order~\cite{beverland_fail_2025}. With the same code at every level and $d(X) = d(Y) = d(Z) = d$ at level $1$, Equation~\eqref{eq:class-distance} gives $D_\theta = d^{\ell}$ at level $\ell$~\cite{knill_concatenated_1996}. Evaluating Equation~\eqref{eq:recursion} directly is exponential in $n$. The sums in Equation~\eqref{eq:parent-soft} run over the $2^{n-1}$ elements of each $\mathcal{M}_P$, and the sums in Equation~\eqref{eq:recursion} run over all tuples of values of the child blocks. At level 1, Theorem~\ref{thm:exact} avoids this by counting only the error configurations of weight $\lceil d/2 \rceil$. Section~\ref{sec:approximation} approximates Equation~\eqref{eq:recursion} above level 1.

\section{Staying Sober: Budgeted DP to Estimate LERs}\label{sec:approximation}
This section describes our predictor, which approximates Theorem~\ref{thm:recursion} above level 1. None of the relaxations changes the leading term at level 1, so if no budget of Section~\ref{sec:cost} is exhausted, the predictor is still exact at level 1 by Theorem~\ref{thm:exact}. For fixed budgets, the predictor takes time polynomial in $d$ and in the number of minimum-weight logical operators at each level, so its runtime is linear in the number of levels. Section~\ref{sec:relaxed} identifies the relaxations we make to the original requirements in Theorem~\ref{thm:exact} and Theorem~\ref{thm:recursion}, Section~\ref{sec:dp} then describes our algorithm, Section~\ref{sec:cost} identifies the budgeted approximations, and Section~\ref{sec:evaluation} discusses the accuracy of the predictor.

\subsection{Relaxed Requirements}\label{sec:relaxed}
Before describing the relaxed requirements, we first define a slightly different distribution that the child blocks pass up to their parents, which in practice turns out to be easier to compute.

\begin{definition}[Distribution restricted to a Pauli]\label{def:pauli-distribution}
Let $P \in \{X, Y, Z\}$. For an error configuration $E$ of a block with $\lambda_0 = \mathsf{cls}(E)$, the soft information of $E$ restricted to $P$ is $\sigma_P(E) = \left( \gap_P(E), \rho_P(E), [\lambda_0 \cdot \mathsf{cls}(\bar P) \in \Lambda'_E] \right)$, where $[\cdot]$ is $1$ if the condition holds and $0$ otherwise. The distribution $\mathcal{D}_P$ of the block restricted to $P$ is defined as $\mathcal{D}$ in Definition~\ref{def:soft-distribution}, with $\sigma_P(E)$ in place of $\sigma(E)$, i.e., $\mathcal{D}_P(\tau) = \left( o_P(\tau), q_P(\tau) \right)$ gives the leading order and the leading coefficient of $\Pr(\sigma_P(E) = \tau)$.
\end{definition}

As in Equation~\eqref{eq:soft-leading}, $\Pr(\sigma_P(E) = \tau) \sim q_P(\tau)\, p^{o_P(\tau)}$. We also name the child blocks on which an error configuration of a parent block acts.

\begin{definition}[Touched child blocks]\label{def:touched}
An error configuration $E = E^{(1)} \cdots E^{(n)}$ of a parent block touches child block $j$ if $E^{(j)} \neq I$. The set of child blocks that $E$ touches is $Q_E = \{ j : E^{(j)} \neq I \}$.
\end{definition}

Theorem~\ref{thm:recursion} needs the soft-information distribution $\mathcal{D}^{(j)}$ of every child block $j$, and sums over every error configuration of the parent block. The predictor relaxes these two requirements.
\begin{enumerate}
    \item[(R1)] \emph{Marginals.} Each child block passes up $\mathcal{D}_X$, $\mathcal{D}_Y$ and $\mathcal{D}_Z$ separately, instead of $\mathcal{D}$. These lose the correlations between the Paulis, e.g., how likely an error configuration with $\gap_X(E) = 0$ is to also have $\gap_Y(E) = 0$, except for the one bit that the last entry of $\sigma_P(E)$ keeps, since keeping them all would require listing all $2^d$ error configurations inside a single logical operator of weight $d$. Section~\ref{sec:dp} approximates the three quantities that need these correlations: the true class, i.e., the denominator of Equation~\eqref{eq:parent-soft}, the soft information at the Paulis of the non-lightest competing classes, and the ambiguous classes in Corollary~\ref{cor:gap-fraction}.
    \item[(R2)] \emph{Truncation.} Instead of summing over every error configuration of the parent block, the predictor keeps only those with $Q_E \subseteq \supp(M)$ for some $M \in \mathcal{L}$, and, at level 1 and for odd $d$, those of cases~(b) and~(c) of Lemma~\ref{lem:competitors}. A parent block may need error configurations of its child blocks of weight up to $d$ that no element of $\mathcal{L}$ contains. However, enumerating every error configuration of weight $d$ in a block of $n$ qubits would require iterating over $\binom{n}{d} 3^d$ error configurations.
\end{enumerate}

\subsection{Counting by Accumulated Gap}\label{sec:dp}
We now describe how the predictor computes the distributions $\mathcal{D}_X$, $\mathcal{D}_Y$ and $\mathcal{D}_Z$, and the LER, of a parent block from the distributions of its child blocks, using the notation of Section~\ref{sec:recursion}. By (R2), it counts each error configuration $E$ with $Q_E \subseteq \supp(M)$ for some $M \in \mathcal{L}$ once, through its closure and Lemma~\ref{lem:mobius-count}, as in Section~\ref{sec:counting}. Therefore, for each closure $J \in \mathcal{J}$ and each $J' \subseteq J$ in $\mathcal{J}$, we need to evaluate the inner sum of Equation~\eqref{eq:mobius-count}, which runs over the error configurations $E$ with $Q_E \subseteq \supp(J')$.

For an error configuration $E$ that has $Q_E \subseteq \supp(J)$, Equation~\eqref{eq:parent-soft} becomes simpler. Every $M \in \mathcal{L}_J$ queries each child block $j \in \supp(J)$ at $M_j = J_j$. Since $Q_E \subseteq \supp(J)$, the remaining child blocks $j \in \supp(M) \setminus \supp(J)$ are untouched, so their soft information at $M_j$ is $\gap_{M_j}(I) = d^{(j)}(M_j)$ with coefficient ratio $\rho_{M_j}(I)$ by Section~\ref{sec:recursion}. The term of $M$ in the numerator of Equation~\eqref{eq:parent-soft} therefore factors as,
\begin{equation}\label{eq:closure-factor}
\left( \prod_{j \in \supp(J)} \rho_{J_j}(E^{(j)})\, p^{\gap_{J_j}(E^{(j)})} \right) \left( \prod_{j \in \supp(M) \setminus \supp(J)} \rho_{M_j}(I)\, p^{d^{(j)}(M_j)} \right),
\end{equation}
where the first factor depends only on $E$, and the second depends only on $M$. Summing the second factor, a product over the child blocks in $\supp(M) \setminus \supp(J)$, over the $M \in \mathcal{L}_J$ of one class gives a leading term, which we call the leading factor of that class. Let $r_J\, p^{e_J}$ be the largest of these leading factors for \emph{sufficiently small} $p$, and let $\mathcal{P}_J$ be the set of Paulis $P$ for which the leading factor of the class $\mathsf{cls}(\bar P)$ is $r_J\, p^{e_J}$. The predictor compares the true class only with these most likely competing classes $\lambda_0 \cdot \mathsf{cls}(\bar P)$, $P \in \mathcal{P}_J$. By (R1), it only estimates the soft information at the other Paulis, from the logical operators that share child blocks with $E$ and from the values of $E$ at those Paulis. Before describing the algorithm, we evaluate the true class, i.e., the denominator of Equation~\eqref{eq:parent-soft}, from the marginals of (R1), using the following lemma.

\begin{lemma}[Zero difference is a containment]\label{lem:zero-difference}
Let $M \in \mathcal{L}$, let $F \subseteq M$ be an error configuration, and let $g \in S$. Then $\wt(F \cdot g) \ge \wt(F)$, with equality exactly when $\wt(M \cdot g) = d$ and $\supp(g) \cap \supp(M) \subseteq \supp(F)$.
\end{lemma}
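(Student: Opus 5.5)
The plan is a direct qubit-by-qubit weight count. I would split the $n$ qubits into three regions, $\supp(F)$, $\supp(M) \setminus \supp(F)$ and the complement of $\supp(M)$, and write both $\wt(F \cdot g)$ and $\wt(M \cdot g)$ as sums over these regions. The only outside input is the distance: $M \in \mathcal{L}$ is a non-trivial logical operator and $g \in S$, so $M \cdot g$ lies in the same non-trivial logical class. It is therefore not a stabilizer up to phase, and hence $\wt(M \cdot g) \ge d$.

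\emph{Counting.} Since $F \subseteq M$, $F$ acts as $M$ on $\supp(F)$ and as the identity elsewhere. Let $a = |\{i \in \supp(F) : g_i = M_i\}|$ and $a_{\mathrm{tot}} = |\{i \in \supp(M) : g_i = M_i\}|$, so that $a \le a_{\mathrm{tot}}$. The weights of the two products are obtained region by region.
\begin{itemize}
\item On $\supp(F)$, $(F \cdot g)_i$ is the identity exactly when $g_i = M_i$, so this region contributes $\wt(F) - a$.
\item On $\supp(M) \setminus \supp(F)$, $(F \cdot g)_i = g_i$, so this region contributes $|\supp(g) \cap (\supp(M) \setminus \supp(F))|$.
\item Outside $\supp(M)$, both $F \cdot g$ and $M \cdot g$ equal $g$, so this region contributes $b = |\supp(g) \setminus \supp(M)|$ to each.
\item On $\supp(M)$ as a whole, $(M \cdot g)_i$ is the identity exactly when $g_i = M_i$, so $\wt(M \cdot g) = d - a_{\mathrm{tot}} + b$.
\end{itemize}

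\emph{Inequality.} Subtracting $\wt(F)$ from the total gives
\begin{equation*}
\wt(F \cdot g) - \wt(F) = (b - a) + |\supp(g) \cap (\supp(M) \setminus \supp(F))|.
\end{equation*}
The distance bound $\wt(M \cdot g) \ge d$ gives $b \ge a_{\mathrm{tot}} \ge a$. Both terms on the right are therefore nonnegative, which proves $\wt(F \cdot g) \ge \wt(F)$.

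\emph{Equality.} Equality holds exactly when both terms vanish, that is, when $b = a$ and $\supp(g) \cap \supp(M) \subseteq \supp(F)$.
\begin{itemize}
\item Forward direction: from $a \le a_{\mathrm{tot}} \le b$ and $b = a$ we get $b = a_{\mathrm{tot}}$, which is precisely $\wt(M \cdot g) = d$.
\item Converse: assume $\wt(M \cdot g) = d$, so $b = a_{\mathrm{tot}}$, and assume the support containment. Every qubit of $\supp(M)$ where $g_i = M_i$ has $g_i \neq I$, so it lies in $\supp(g) \cap \supp(M) \subseteq \supp(F)$. Hence $a = a_{\mathrm{tot}} = b$, and both terms vanish.
\end{itemize}

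The only step needing care is the bookkeeping that separates $a$ from $a_{\mathrm{tot}}$. The difference $b - a$ alone is forced to vanish only in combination with the support condition, and in the converse direction that condition is exactly what makes the two counts agree.
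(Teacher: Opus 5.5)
Your proposal is correct and is essentially the paper's proof: your $b$, $a$ and $a_{\mathrm{tot}}$ are the paper's $c = |\supp(g) \setminus \supp(M)|$, $\alpha(F, g)$ and $\alpha(M, g)$, and you use the same distance bound $c \ge \alpha(M, g)$ from $\wt(M \cdot g) = d + c - \alpha(M, g)$ and the same expression for $\wt(F \cdot g) - \wt(F)$. Your equality analysis checks both directions more explicitly than the paper does, in particular that the support containment forces $\alpha(F, g) = \alpha(M, g)$.
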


We prove Lemma~\ref{lem:zero-difference} in Appendix~\ref{app:predictor}. In the denominator, a stabilizer may act upon a touched child block $j$ at a Pauli other than $J_j$, where (R1) does not provide the soft information of $E^{(j)}$. The predictor therefore assumes that every touched child block $j$ carries a lightest logical operator of class $\mathsf{cls}(\bar{J_j})$, and considers every child block as a single qubit, on which a Pauli $P$ occurs with weight $\rho_P(I)$. The parent block then becomes an unconcatenated code on its $n$ child blocks, and $E$ becomes the error configuration $\tilde E$ that acts as $J$ on $Q_E$ and as the identity elsewhere. Since $\tilde E \subseteq J \subseteq M$ for every $M \in \mathcal{L}_J$, Lemma~\ref{lem:zero-difference} shows that no stabilizer lowers the weight of $\tilde E$. Thus, as in Lemma~\ref{lem:true-class}, the leading term of the denominator of Equation~\eqref{eq:parent-soft} becomes,
\begin{equation}\label{eq:true-coefficient}
\eta(E) = \sum_{\substack{g \in S \\ \wt(\tilde E \cdot g) = \wt(\tilde E)}} \frac{b'_{\tilde E \cdot g}}{b'_{\tilde E}}, \qquad b'_F = \prod_{j \in \supp(F)} \rho_{F_j}(I),
\end{equation}
where $b'_F$ is the bias product of Equation~\eqref{eq:configuration-prob} with $\rho_P(I)$ in place of the biases. At level 1, $\tilde E = E$ and $\rho_P(I) = b_{j,P}$ by Equation~\eqref{eq:qubit-soft}, so $\eta(E) = m_{\lambda_0}(s) / b_E$ is exact. If $E$ touches all $d$ child blocks of some $M \in \mathcal{L}$, then $\tilde E = M$, and by Lemma~\ref{lem:zero-difference}, the other terms of Equation~\eqref{eq:true-coefficient} are other elements of $\mathcal{L}$, whose error configurations the predictor counts separately, through their own closures. It then sets $\eta(E) = 1$. If, in addition, the soft information $\sigma_{J_j}(E^{(j)})$ of every touched child block $j$ equals that of a minimum-weight logical operator of class $\mathsf{cls}(\bar{J_j})$, then $E$ has the soft information of a logical operator of the parent block, and the predictor uses the exact coefficient ratio $1/\rho_P(I)$ of that logical operator, where $\mathsf{cls}(\bar P) = \mathsf{cls}(M)$.

The coefficient $\eta(E)$ depends on which child blocks $E$ touches, and not only on how many. Fix any $M \in \mathcal{L}_J$. By Lemma~\ref{lem:zero-difference} with $F = \tilde E$, a stabilizer $g$ appears in Equation~\eqref{eq:true-coefficient} exactly when $\wt(M \cdot g) = d$ and its overlap $\supp(g) \cap \supp(M)$ lies inside $Q_E$. The term of such a $g$ depends on $\tilde E$ only through this overlap, where $\tilde E$ acts as $J$. Hence, $\eta(E)$ depends on $E$ only through the overlaps that lie inside $Q_E$.

Algorithm~\ref{alg:dp} evaluates the inner sum of Equation~\eqref{eq:mobius-count}. At level 1, each qubit takes only the value $\sigma_{J_j}(I)$ or $\sigma_{J_j}(J_j)$, so that $E \subseteq J'$, as in Section~\ref{sec:counting}. We call child blocks with the same $J_j$ and the same $\mathcal{D}^{(j)}_{J_j}$ that lie in the same overlaps \emph{identical}, since they contribute identically to the first factor of Equation~\eqref{eq:closure-factor} and to $\eta(E)$. We write the three entries of a value $\tau$ of $\sigma_P(E)$ as $\tau.\texttt{gap}$, $\tau.\texttt{ratio}$ and $\tau.\texttt{isLightest}$. For a set of \emph{identical} child blocks, \texttt{Splits} lists every way to choose how many of them have each soft information $\tau \neq \sigma_P(I)$ of $\mathcal{D}_P$, leaving the rest untouched, and \texttt{Ways} counts the ways to choose which child blocks those are.

\begin{algorithm}[t]
    \caption{DP over the child blocks of $J'$ for the closure $J$}
    \label{alg:dp}
    \begin{algorithmic}[1]
        \Procedure{CountByGap}{$J, J'$}
            \State{prob $\gets \{\}$\Comment{maps each state to the leading term of its probability, and is $0$ at a missing state}}
            \State{start.gap $\gets \sum_{j \in \supp(J)} d^{(j)}(J_j)$, \quad start.ratio $\gets \prod_{j \in \supp(J)} \rho_{J_j}(I)$}
            \State{start.numTouched $\gets 0$, \quad start.numNotLightest $\gets 0$, \quad start.fullyTouched $\gets \emptyset$}
            \State{prob[start] $\gets 1$\Comment{initialization: $E = I$ touches no child block}}
            \ForAll{blocks $\in$ \Call{IdenticalBlocks}{$J'$}\Comment{\emph{identical} child blocks $j \in \supp(J')$}}
                \State{$P \gets$ blocks.pauli, \quad $\mathcal{D}_P \gets$ blocks.distribution}
                \State{noErrorGap $\gets \sigma_P(I)$.gap, \quad noErrorRatio $\gets \sigma_P(I)$.ratio}
                \State{newProb $\gets \{\}$}
                \ForAll{state $\in$ prob}
                    \ForAll{counts $\in$ \Call{Splits}{blocks}}
                        \Statex{\Comment{counts[$\tau$] of these child blocks take the value $\tau \neq \sigma_P(I)$ of $\mathcal{D}_P$}}
                        \State{next.gap $\gets$ state.gap $+ \sum_\tau$ counts[$\tau$] $\times$ ($\tau$.gap $-$ noErrorGap)}
                        \State{next.ratio $\gets$ state.ratio $\times \prod_\tau$ ($\tau$.ratio $/$ noErrorRatio)$^{\text{counts}[\tau]}$}
                        \State{next.numTouched $\gets$ state.numTouched $+ \sum_\tau$ counts[$\tau$]}
                        \State{next.numNotLightest $\gets$ state.numNotLightest $+ \sum_{\tau :\, \tau.\text{isLightest} = 0}$ counts[$\tau$]}
                        \State{next.fullyTouched $\gets$ state.fullyTouched $\cup$ (blocks if $\sum_\tau$ counts[$\tau$] $=$ $|$blocks$|$, else $\emptyset$)}
                        \State{term $\gets$ prob[state] $\times$ \Call{Ways}{blocks, counts} $\times \prod_\tau \left( q_P(\tau)\, p^{o_P(\tau)} \right)^{\text{counts}[\tau]}$}
                        \State{newProb[next] $\gets$ newProb[next] $+$ term}
                        \Statex{\Comment{update: \texttt{Ways} counts the choices of which child blocks take each value}}
                    \EndFor{}
                \EndFor{}
                \State{prob $\gets$ newProb}
            \EndFor{}
            \State{\Return{prob}}
        \EndProcedure{}
    \end{algorithmic}
\end{algorithm}

\sloppy A state describes the error configurations $E$ with the same first factor of Equation~\eqref{eq:closure-factor}, namely $\texttt{state.ratio} \times p^{\texttt{state.gap}}$, the same number \texttt{state.numTouched} of touched child blocks, the same number \texttt{state.numNotLightest} of touched child blocks whose picked value $\tau$ has $\tau.\texttt{isLightest} = 0$, and the same set \texttt{state.fullyTouched} of child blocks in sets of \emph{identical} child blocks that they touch completely. Since every overlap is a union of such sets, an overlap lies inside $Q_E$ exactly when it lies inside \texttt{state.fullyTouched}. Thus, all these error configurations have the same $\eta(E)$, which we write as $\eta(\texttt{state})$. Since $\Pr(E) = \prod_j \Pr(E^{(j)})$, as in Theorem~\ref{thm:recursion}, $\texttt{prob[state]}$ is asymptotic to the total probability $\sum_E \Pr(E)$ of these error configurations.

For each state, Algorithm~\ref{alg:level} estimates the soft information of its error configurations at every $P \in \mathcal{P}_J$ from Equations~\eqref{eq:closure-factor} and~\eqref{eq:true-coefficient} as,
\begin{equation}\label{eq:state-soft}
\gap_P(E) = \texttt{state.gap} + e_J, \qquad \rho_P(E) = \frac{\texttt{state.ratio} \times r_J}{\eta(\texttt{state})},
\end{equation}
with the replacements above for the error configurations that touch all $d$ child blocks of some $M \in \mathcal{L}$. By Corollary~\ref{cor:gap-fraction}, with the classes $\lambda_0 \cdot \mathsf{cls}(\bar P)$ for $P \in \mathcal{P}_J$ in place of $\Lambda_E$, the decoder of the parent block then returns each of these classes with probability,
\begin{equation}\label{eq:return-share}
\begin{cases} 1/|\mathcal{P}_J| & \text{if } \gap_P(E) < 0, \text{ or } \gap_P(E) = 0 \text{ and } \rho_P(E) > 1, \\ 1/(|\mathcal{P}_J| + \texttt{state.numNotLightest} + 1) & \text{if } \gap_P(E) = 0 \text{ and } \rho_P(E) = 1, \\ 0 & \text{otherwise}. \end{cases}
\end{equation}
In the middle case, each touched child block whose picked value has last entry $0$ may add an ambiguous class that cannot be identified due to (R1), and the predictor assumes an ambiguous class exists for each such child block.

\begin{algorithm}[t]
    \caption{One level of the predictor}
    \label{alg:level}
    \begin{algorithmic}[1]
        \Procedure{ParentBlock}{$\mathcal{J}$}
            \State{dist $\gets \{\}$, \quad ler $\gets \{\}$}
            \Statex{\Comment{dist[$P$] maps each value of $\sigma_P$ to the leading term of its probability, ler[$\theta$] is the leading term of the LER of $\theta$, and both resolve their signed sums from the lowest order up}}
            \ForAll{$J, J' \in \mathcal{J}$ with $J' \subseteq J$}
                \State{prob $\gets$ \Call{CountByGap}{$J, J'$}}
                \ForAll{state $\in$ prob with state.numTouched $\ge \lceil d/2 \rceil$}
                    \Statex{\Comment{states with fewer touched child blocks are left to \texttt{AddListed}}}
                    \State{weight $\gets \mu(J', J) \times$ prob[state]}
                    \State{$\eta$(state) $\gets$ Equation~\eqref{eq:true-coefficient}, over the stabilizers whose overlap lies in state.fullyTouched}
                    \Statex{\Comment{cached for each $J$ and state.fullyTouched}}
                    \State{gap, ratio $\gets$ Equation~\eqref{eq:state-soft}, \quad share $\gets$ Equation~\eqref{eq:return-share}}
                    \ForAll{$P \in \mathcal{P}_J$}
                        \If{gap $\le 0$ or state.numNotLightest $= 0$}
                            \State{dist[$P$][(gap, ratio, $1$)] $\gets$ dist[$P$][(gap, ratio, $1$)] $+$ weight}
                            \Statex{\Comment{otherwise, the closure that queries the non-lightest child blocks at their lightest classes adds it}}
                        \EndIf
                    \EndFor
                    \ForAll{logical observables $\theta$}
                        \State{ler[$\theta$] $\gets$ ler[$\theta$] $+$ share $\times |\{P \in \mathcal{P}_J : \bar P \text{ flips } \theta\}| \times$ weight}
                    \EndFor
                \EndFor
            \EndFor
            \State{\Call{AddListed}{dist, ler}}
            \Statex{\Comment{the error configurations that touch fewer than $\lceil d/2 \rceil$ child blocks, those of cases~(b) and~(c) at level 1, and the estimates at the other Paulis}}
            \State{\Return{dist, ler}}
        \EndProcedure{}
    \end{algorithmic}
\end{algorithm}

\subsection{Budgets and Runtime Cost}\label{sec:cost}
\begin{table}[t]
    \centering
    \small
    \caption{Budgets of our predictor, a description of what they budget for, and their default values.}
    \label{tab:budgets}
    \begin{tabular}{l p{0.75\textwidth} r}
        \toprule
        Budget & Description & Default \\
        \midrule
        $B_{\text{closure}}$ & elements of $\mathcal{J}$, or error configurations sampled & $2 \times 10^{5}$ \\
        $B_{\text{gap}}$ & gaps stored in each $\mathcal{D}_P$, and $2B_{\text{gap}} + 1$ accumulated gaps in Algorithm~\ref{alg:dp}, smallest magnitude first & $16$ \\
        $B_{\text{likelihood}}$ & coefficient ratios stored per gap in each $\mathcal{D}_P$ & $16$ \\
        $B_{\text{cellratio}}$ & values of \texttt{state.ratio} stored per value of the other fields of a state & $4096$ \\
        $B_{\text{work}}$ & steps of the enumeration of stabilizers & $3 \times 10^{6}$ \\
        $B_{\text{truthconv}}$ & maximum number of terms in Equation~\eqref{eq:true-coefficient} that the predictor sums & $4096$ \\
        $B_{\text{configweight}}$ & child blocks touched by an error configuration that \texttt{AddListed} lists & $2$ \\
        $B_{\text{class}}$ & sets of ambiguous classes kept per gap, needed only for $k > 1$ & $16$ \\
        \bottomrule
    \end{tabular}
\end{table}
Table~\ref{tab:budgets} lists the budgets of our predictor. We now describe what happens when they are exhausted. Algorithm~\ref{alg:dp} restricts prob to the budgets $B_{\text{gap}}$ and $B_{\text{cellratio}}$ each time it sets prob $\gets$ newProb, and Algorithm~\ref{alg:level} restricts each dist[$P$] to the budgets $B_{\text{gap}}$ and $B_{\text{likelihood}}$ before returning it. We build $\mathcal{J}$ once per code, from $\mathcal{L}$, by intersecting the members of $\mathcal{L}$ one at a time. We discard any intersection of weight below $w$, since intersecting it further only lowers its weight. When $|\mathcal{J}|$ would exceed $B_{\text{closure}}$, we instead sample $B_{\text{closure}}$ \emph{enclosed} error configurations. We do the same when the pairs $J' \subseteq J$ in $\mathcal{J}$, or the unions of the overlaps of one closure, would exceed fixed budgets. At level 1, we also sample $B_{\text{closure}}$ error configurations of cases~(b) and~(c) when there are more than $B_{\text{closure}}$ of them. Every coefficient then carries a sampling error, which we estimate at level 1. There are $\binom{n}{t} 3^t$ Pauli operators of weight $t$ on $n$ qubits, so the enumeration of stabilizers only reaches the largest weight $t$ with $\binom{n}{t} 3^t \le B_{\text{work}}$. By Lemma~\ref{lem:zero-difference}, every stabilizer $g$ of Equation~\eqref{eq:true-coefficient} has $\wt(g) \le 2d$, as we show in Appendix~\ref{app:predictor}. Thus, $B_{\text{work}}$ is exhausted only when $t < 2d$. Appendix~\ref{app:predictor} derives the runtime cost.
\section{Evaluation}\label{sec:evaluation}

\subsection{Methodology}\label{sec:methodology}
We evaluate our predictor on the codes of Table~\ref{tab:codes}. We concatenate each code with itself as long as the effective code distance is $< 30$ (except for iceberg, which is concatenated up to $5$ levels). Since the toric code has $k = 2$, we perform interleaved concatenation, as described in Section~\ref{sec:tree-decoder}. We use depolarizing noise in the code-capacity setting of Section~\ref{sec:ml-decoding}, and we report the LER of the first logical qubit. We use the default budgets of Table~\ref{tab:budgets}, except that above level 1 we set $B_{\text{work}} = 4.8 \times 10^{7}$. For a few surface codes at level 1, we also pick a higher $B_{\text{closure}}$ to ensure that the closures are enumerated within budget.

\begin{table}
    \centering
    \small
    \caption{The codes we evaluate in this work, the levels $\ell$ at which we predict their LERs, along with the reference baselines used.}
    \label{tab:codes}
    \begin{tabular}{l l l l}
        \toprule
        Code & $[[n, k, d]]$ & Levels & Baselines \\
        \midrule
        Iceberg~\cite{self_protecting_2024} & $[[4, 1, 2]]$ & $\ell \le 5$ & Monte Carlo for $\ell \le 5$, exact for $\ell \le 3$ \\
        Surface & $[[d^2, 1, d]]$, $d = 2\text{--}29$ & $\ell \le 4$ & for $d = 3$: Monte Carlo for $\ell \le 2$, exact for $\ell = 1$ \\
        Toric & $[[2d^2, 2, d]]$, $d = 2\text{--}29$ & $\ell \le 4$ & for $d = 2$: Monte Carlo for $\ell \le 2$, exact for $\ell = 1, 2$ \\
        Gross~\cite{bravyi_high-threshold_2024} & $[[144, 12, 12]]$ & $\ell = 1$ & --- \\
        GALA~\cite{yang_designer_2026} & $[[132, 30, 12]]$ & $\ell = 1$ & --- \\
        Mitten~\cite{bhardwaj_high-rate_2026} & $[[150, 30, 10]]$ & $\ell = 1$ & --- \\
        & $[[200, 40, 12]]$ & $\ell = 1$ & --- \\
        & $[[300, 60, 14]]$ & $\ell = 1$ & --- \\
        \bottomrule
    \end{tabular}
\end{table}

\begin{figure}
    \centering
    \includegraphics[width=\linewidth]{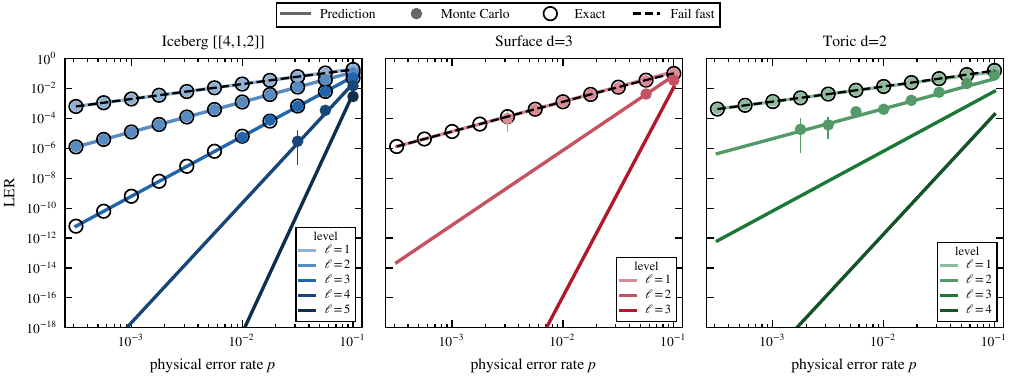}
    \caption{LER of the first logical qubit versus $p$ under depolarizing noise, with one panel per code. Lines show the prediction $c\,p^{o}$. Filled markers show Monte Carlo with 95\% confidence interval error bars. Open circles show exact enumeration, and dashed lines show fail fast at level 1.}
    \label{fig:validation}
\end{figure}

We compare our predictor with three baselines. First, we estimate the LER by Monte Carlo sampling with the tree decoder of Section~\ref{sec:tree-decoder}, with error bars showing 95\% confidence intervals. Second, we compute the exact LER of the same decoder by enumerating errors and modelling the action of the joint-marginal ML decoder on each error configuration. These two baselines are feasible only for small codes and distances as described in Table~\ref{tab:codes}. Third, at level 1, we compare with fail fast~\cite{beverland_fail_2025}. It estimates the coefficients of Equation~\eqref{eq:ler-expansion} from sampled error configurations of each weight, and fits an ansatz to them.

Figure~\ref{fig:validation} compares the predicted LER with these baselines. At level 1, the prediction agrees with exact enumeration to within $0.1\%$ at $p = 3.2 \times 10^{-4}$, validating Theorem~\ref{thm:exact}. Above level 1, the prediction agrees to within $0.2\%$ for the iceberg code up to level 3. For the toric code at level 2, the predicted leading coefficient equals the exact one.

\begin{figure}
    \centering
    \includegraphics[width=\linewidth]{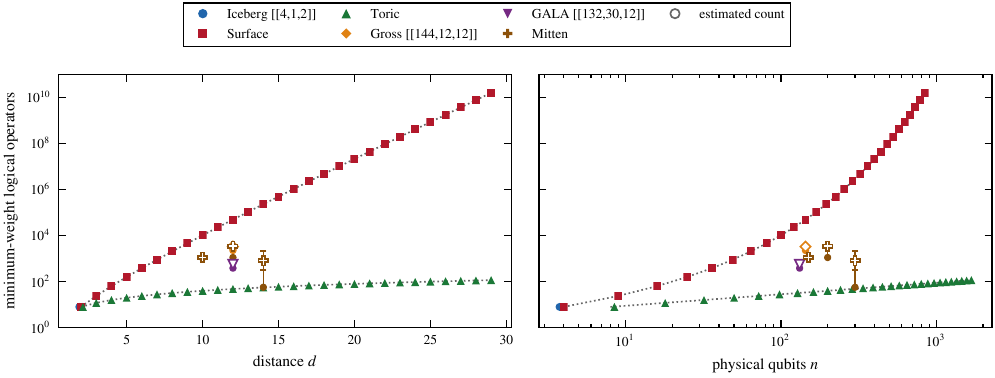}
    \caption{Number of minimum-weight logical operators for each code, against $d$ (left) and $n$ (right). Filled markers give the exact count, and dotted lines give $d\,2^{d}$ for the surface code and $4d$ for the toric code. For the high-rate codes, dots give the logical operators that the search finds, and open markers give the Chao1 estimate of their count, with its 95\% interval.}
    \label{fig:logicals}
\end{figure}

\subsubsection{Enumerating Logical Operators}

\begin{figure}
    \centering
    \includegraphics[width=\linewidth]{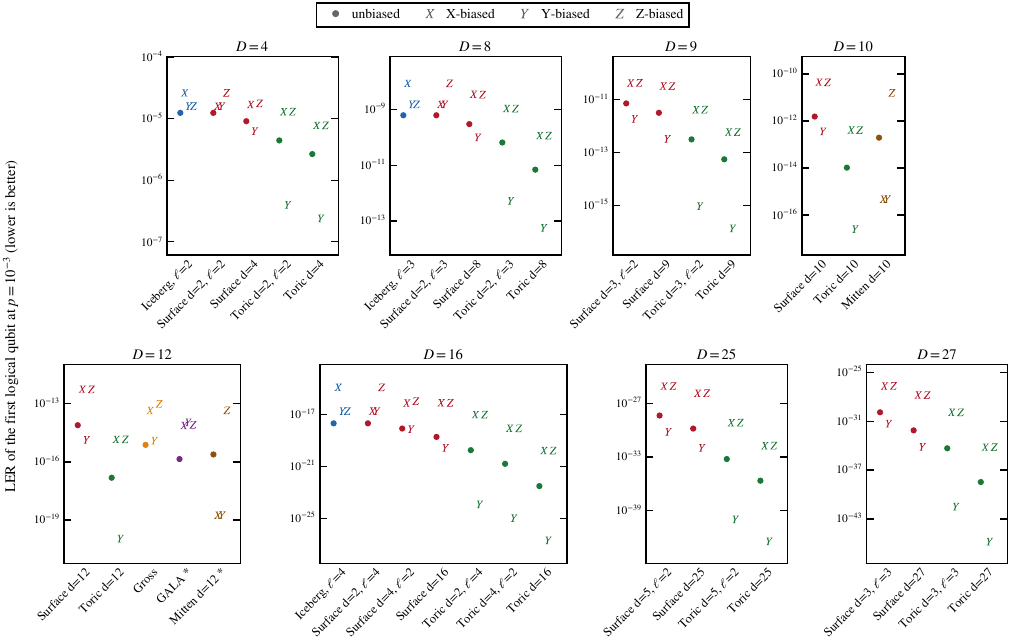}
    \caption{Predicted LER of the first logical qubit at $p = 10^{-3}$ under code-capacity noise, grouping codes with the same effective distance $D = d^{\ell}$. Dots give unbiased, i.e., depolarizing, noise, and the letters $X$, $Y$ and $Z$ give noise biased towards that Pauli, which occurs with probability $0.8p$, while each of the other two occurs with probability $0.1p$. An asterisk marks a high-rate code whose prediction uses only the logical operators that the search finds, as described in Section~\ref{sec:methodology}.}
    \label{fig:ranking}
\end{figure}
Our predictor takes $\mathcal{L}$ and, for odd $d$, $\mathcal{N}$ as inputs (Problem~\ref{prob:leading-ler}). We find them once per code, using a search that stops after $B_{\text{enum}} = 2 \times 10^{7}$ steps. If we do not find enough logical operators via a direct search, we make $B_{\text{search}} = 10^{4}$ calls to a decoder as described in~\cite{beverland_fail_2025}. A missing logical operator causes the predictor to miss some error configurations on which the actual decoder fails. Thus, if every budget of Table~\ref{tab:budgets} is large enough, the computed leading coefficient of the total LER at level 1 is a lower bound on the true one, as in~\cite{beverland_fail_2025}. This may not hold for the LER of any particular logical observable $\theta$, since a missing logical operator may also cause the predictor to miss a competing class that does not flip $\theta$, leading to a different failure fraction.

For the toric code, we list its $4d$ minimum-weight logical operators directly. For the surface code, $\mathcal{L}$ has $d\,2^{d}$ elements. For $d = 13$, we increase $B_{\text{enum}}$ to $10^{12}$, which suffices for the search. For $d \ge 14$, we instead count the size of $\mathcal{L}$ combinatorially, and sample $10^{4}$ elements of $\mathcal{L}$ uniformly at random instead of a search. For the high-rate codes, the search stops early, and we set $B_{\text{search}} = 10^{5}$, or $3 \times 10^{4}$ for the Mitten code with $d = 14$.

Figure~\ref{fig:logicals} shows the size of $\mathcal{L}$ for each code. For the high-rate codes, the search finds between $7\%$ and $95\%$ of $\mathcal{L}$, by the Chao1 estimate~\cite{chao_nonparametric_1984}. When the estimated chance that another call returns a logical operator already found is at least $0.94$, i.e., for the gross code and the Mitten code with $d = 10$, we scale the predicted leading coefficient by the ratio of the estimated size of $\mathcal{L}$ to the number of logical operators found. For the other high-rate codes, we report the prediction from the logical operators found, since the estimate is not confident enough to predict the total number of minimum-weight logical operators for the code.

\begin{figure}
    \centering
    \includegraphics[width=\linewidth]{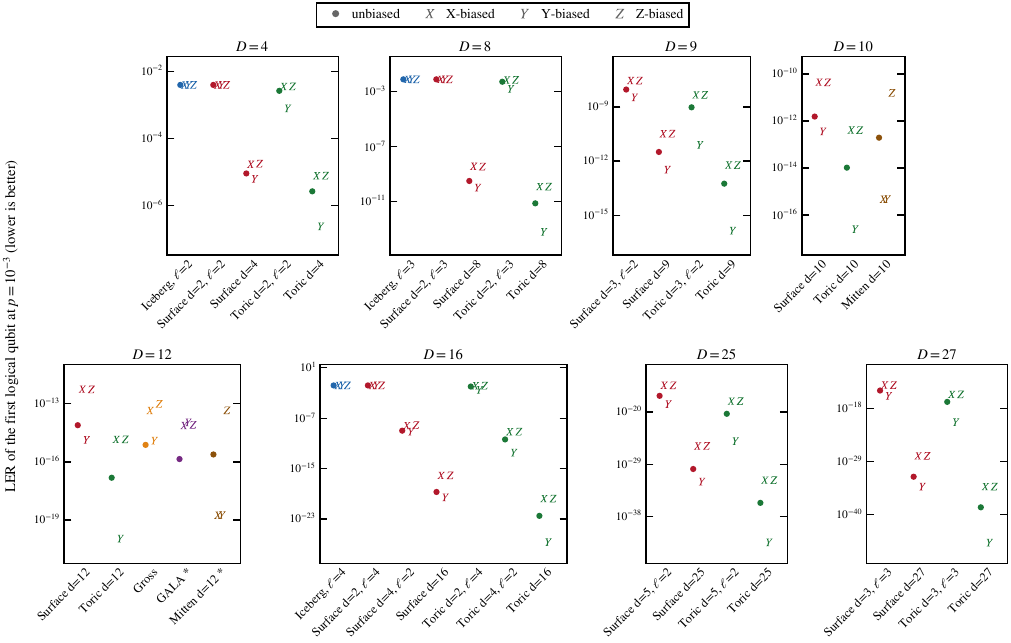}
    \caption{As in Figure~\ref{fig:ranking}, under phenomenological noise, where each level above the first also experiences logical measurement errors, at the LER of the level below. Codes at level 1 are therefore unchanged from Figure~\ref{fig:ranking}.}
    \label{fig:ranking-pheno}
\end{figure}

\subsection{Rank Ordering}

Using our predictor, we can now efficiently compare different code families to pick the code choice which satisfies desired properties, for example, encoding rate at least $e^*$, or number of logical qubits $k \ge k^*$, etc. for some choice of $e^*, k^*$, and still has the best logical error rate. Evaluating this without our predictor would require performing Monte Carlo sampling on each code, requiring days, or even years of computation (Figure~\ref{fig:runtime}). We demonstrate applications of our predictor in Figures~\ref{fig:ranking} and~\ref{fig:ranking-pheno}, where we compare different codes under the code-capacity and phenomenological noise models, respectively. Each code is grouped by $D = d^\ell$, the effective distance of the code. We also compare different codes in their \emph{protection efficiency}, and \emph{encoding efficiency} in Figure~\ref{fig:ranking-frontier} which lets one pick the optimal code at a target LER, that minimizes the number of physical qubits, $n$, or the number of physical qubits per logical qubit, $n/k$, respectively.
\par Figure~\ref{fig:ranking} shows that the toric code has the lowest LER at every effective distance, under unbiased, $Y$-biased and $Z$-biased noise. Every configuration of the iceberg, surface and toric codes with effective distance $D$ uses $D^2$ physical qubits per logical qubit, so each group compares them at equal cost. In every group, the code at level 1 has a lower LER than the same family concatenated to the same $D$. The best code also depends on the noise bias: under $X$-biased noise, the Mitten codes have the lowest LER at $D = 10$ and $D = 12$, about $900$ and $8000$ times lower than the toric code, while under $Z$-biased noise their LER is $30$ to $40$ times higher than the toric code's. Under phenomenological noise (Figure~\ref{fig:ranking-pheno}), every concatenated code gets worse, since each level adds logical measurement errors at the LER of the level below, while codes at level 1 are unchanged.
\par Figure~\ref{fig:ranking-frontier} shows that GALA and the Mitten codes need the fewest physical qubits per logical qubit at every target LER. In physical qubits, however, the surface code is the cheapest for every target LER down to about $2 \times 10^{-14}$ at $p = 10^{-3}$, and for nearly every target LER down to $10^{-15}$ at $p = 10^{-4}$. Below that target at $p = 10^{-3}$, GALA needs fewer physical qubits than any surface code, and so does the gross code below about $8 \times 10^{-15}$. For example, at $p = 10^{-3}$ and a target LER of $10^{-12}$, the surface code with $d = 11$ needs the fewest physical qubits, $121$, but for $e^* = 0.2$, GALA reaches the target with $132$ physical qubits for $30$ logical qubits.
\par Our predictor produced every point in Figures~\ref{fig:ranking} to~\ref{fig:ranking-frontier} in $48$ core-hours in total, and the same runs give the predicted LER at every $p$. Monte Carlo, in contrast, must be rerun for each $p$, and at $p = 10^{-3}$ alone it would need more than $10^{12}$ shots for $113$ of the $148$ points in Figure~\ref{fig:ranking}.

\subsection{Sensitivity to Budgets}
Figure~\ref{fig:sensitivity} varies each budget of Table~\ref{tab:budgets} around its default, one at a time, for six concatenated codes at their deepest level and for the surface codes with $d = 9$ and $d = 11$ at level 1. Four budgets, $B_{\text{cellratio}}$, $B_{\text{class}}$, $B_{\text{configweight}}$ and $B_{\text{truthconv}}$, never change the predicted LER. Lowering $B_{\text{gap}}$ or $B_{\text{likelihood}}$ below its default changes the predicted LER, by up to $34\%$ at half the default $B_{\text{gap}}$, and by up to nine orders of magnitude at $B_{\text{gap}} = 2$. Above their defaults, neither changes the LER by more than $3\%$, while $B_{\text{likelihood}}$ multiplies the run time by up to $12$ at $16$ times its default. At the default budgets, every prediction is within $3\%$ of its value at the largest budget, except for the surface code with $d = 4$ at level 2, whose LER is $15\%$ higher at the default $B_{\text{work}}$, and stops changing from twice the default on. We therefore set $B_{\text{work}} = 4.8 \times 10^{7}$ above level 1. $B_{\text{closure}}$ changes only the surface codes with $d = 9$ and $d = 11$ at level 1, whose error configurations it samples: at its default, their LER is within $2.2\%$ of its value at the largest budget, and their run time grows about linearly with $B_{\text{closure}}$.

\begin{figure}
    \centering
    \includegraphics[width=\linewidth]{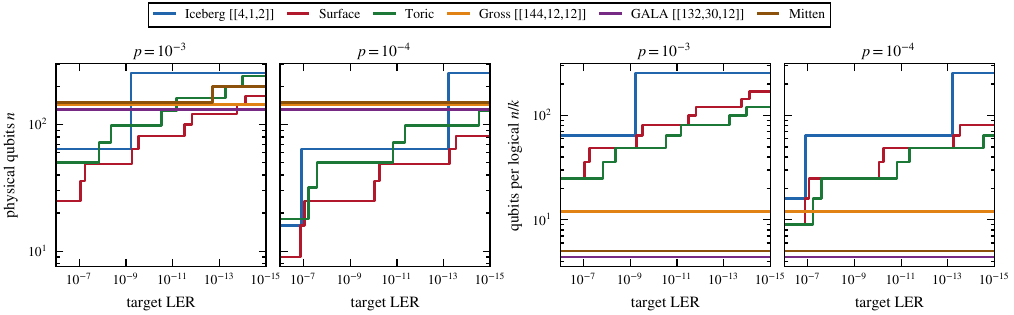}
    \caption{Plots depicting the number of physical qubits $n$ (left two panels) and physical qubits per logical qubit $n/k$ (right two panels) required by each code family to achieve a target LER, at $p = 10^{-3}$ and $p = 10^{-4}$. Each curve takes the cheapest $d$ and $\ell$ in its family whose predicted LER of the first logical qubit, under depolarizing code-capacity noise, is at most the target.}
    \label{fig:ranking-frontier}
\end{figure}

\begin{figure}
    \centering
    \includegraphics[width=\linewidth]{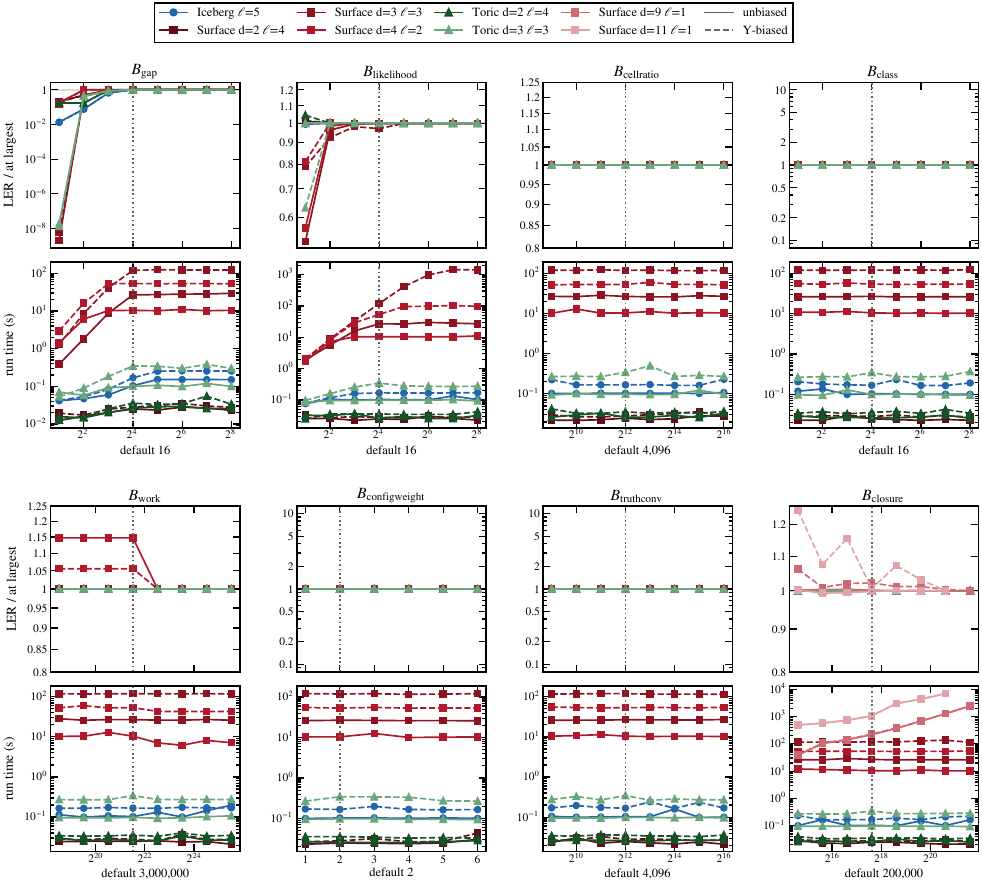}
    \caption{Sensitivity of our predictor to each budget of Table~\ref{tab:budgets}, varied one at a time around its default (dotted line), with the other budgets at their defaults, under code-capacity noise. For each budget, the top panel shows the predicted LER of the first logical qubit at $p = 10^{-3}$, relative to its value at the largest budget we ran, and the bottom panel shows the run time. Each line is one code at its deepest level, under unbiased (solid) or $Y$-biased (dashed) noise, and the surface codes with $d = 9$ and $d = 11$ at level 1 appear only for $B_{\text{closure}}$.}
    \label{fig:sensitivity}
\end{figure}
\section{Conclusion}\label{sec:conclusion}
In this work, we propose a predictor that computes the leading-order approximation $c\,p^{o}$ of the LER of a QEC code at every level of concatenation, taking the stabilizers, the minimum-weight logical operators and, for odd $d$, the logical operators of weight $d + 1$ of the code as input. Our predictions are exact for any unconcatenated stabilizer code whenever no budget is exhausted. For concatenated codes, we first prove a recursion that computes the leading-order approximation exactly, for an arbitrary number of logical qubits per block under interleaved concatenation. We then approximate this recursion with an algorithm that, for fixed budgets, runs in time polynomial in $d$ and in the number of minimum-weight logical operators, and linear in the number of levels. Our predictions match exact enumeration at level 1 on every code we evaluate and up to level 3 for the iceberg code. For the toric code at level 2, our predicted leading coefficient equals the exact leading coefficient. Our predictor thus enables an efficient search among code families for the code with the lowest LER that has the desired properties, such as a minimum encoding rate or a minimum number of logical qubits. The same search would be intractable with Monte Carlo sampling, which needs more than $10^{12}$ shots at $p = 10^{-3}$ for $113$ of the $148$ cases we compare.

\section*{Acknowledgements}
The authors acknowledge the Texas Advanced Computing Center (TACC) at The University of Texas at Austin for providing computational resources that have contributed to the research results reported in this paper. This work was funded in part by the Texas Quantum Institute (TQI). AI tools were used to assist with implementing parts of the evaluation framework, and for help with drafting some technical details of the theorems. The authors developed the key ideas in the paper, and verified all AI-assisted contributions.
\appendix
\section{Proofs for Unconcatenated Codes}\label{app:unconcatenated}
For Pauli operators $A$ and $B$, let $\alpha(A, B)$ be the number of qubits where $A$ and $B$ carry the same non-identity Pauli, and $\beta(A, B)$ the number of qubits where both act but with different Paulis. Counting qubit by qubit,
\begin{equation}\label{eq:product-weight}
\wt(A \cdot B) = \wt(A) + \wt(B) - 2\alpha(A, B) - \beta(A, B),
\end{equation}
so $\wt(A \cdot B) \le \wt(A) + \wt(B)$. The proofs below also use the \emph{distance argument}: a Pauli operator with trivial syndrome and a non-trivial class is a non-trivial logical operator, so it has weight at least $d$. In particular, two operators with the same syndrome but different classes have weights that sum to at least $d$.

\begin{proof}[Proof of Lemma~\ref{lem:leading-terms}]
Each part follows from the definition of $\sim$ by taking the ratio of the two sides as $p \to 0$. For the sums and maxima in (i), this ratio lies between the smallest and the largest of the ratios $f_a/g_a$, which all tend to $1$.
\end{proof}

\begin{proof}[Proof of Lemma~\ref{lem:decision-rule}]
For every class $\lambda \neq \lambda^*$, Equation~\eqref{eq:leading-limit} gives $W_\lambda(s) / W_{\lambda^*}(s) \sim \left( m_\lambda(s) / m_{\lambda^*}(s) \right) p^{\,\omega_\lambda(s) - \omega_{\lambda^*}(s)}$, which tends to $0$ or to a value below $1$. Each ratio is therefore below $1$ for small enough $p$, and $p_0$ is the smallest of these finitely many thresholds.
\end{proof}

\begin{proof}[Proof of Corollary~\ref{cor:gap-fraction}]
The classes of smallest leading order and, among them, largest leading coefficient are $\lambda_0$ alone, $\lambda_0$ together with $\Lambda_E$, or $\Lambda_E$ alone, in the three cases of Equation~\eqref{eq:gap-fraction}. By Lemma~\ref{lem:decision-rule}, the decoder returns one of them, chosen evenly when several are ambiguous. Returning $\lambda_0$ never fails, and returning $\lambda \in \Lambda_E$ fails for $\theta$ exactly when $\lambda \in \Lambda_E^\theta$.
\end{proof}

\begin{proof}[Proof of Lemma~\ref{lem:min-weight}]
Let $E$ have weight below $\lceil d/2 \rceil$, class $\lambda_0$ and syndrome $s$. By the distance argument, every error configuration with syndrome $s$ in another class has weight at least $d - \wt(E) > \wt(E) \ge \omega_{\lambda_0}(s)$. Thus $\gap(E) > 0$, the decoder corrects $E$ by Lemma~\ref{lem:decision-rule}, and $o_\theta \ge \lceil d/2 \rceil$.
\par For the equality, let $L$ be a logical operator of weight $d$ that flips $\theta$, take $E \subseteq L$ of weight $\lceil d/2 \rceil$, and let $E' = E \cdot L$, of weight $\lfloor d/2 \rfloor$. Since $E$ and $E'$ have the same syndrome, the decoder returns the same class for both, and since their classes differ by $\mathsf{cls}(L)$, exactly one of the two residuals flips $\theta$, so $f_\theta(E) + f_\theta(E') = 1$. For odd $d$, $f_\theta(E') = 0$ by the first part, so $f_\theta(E) = 1$. For even $d$, both have weight $d/2$, so one of them has $f_\theta > 0$. Either way, $F_{\theta, \lceil d/2 \rceil} > 0$.
\end{proof}

\begin{proof}[Proof of Lemma~\ref{lem:competitors}]
For (i), let $E' = E \cdot M$, with $\alpha = \alpha(E, E')$ and $\beta = \beta(E, E')$. Equation~\eqref{eq:product-weight} gives $\wt(M) = w + \wt(E') - 2\alpha - \beta$, and $\wt(M) \ge d$ with $\wt(E') \le w$ gives $2\alpha + \beta \le 2w - d$, which is $0$ for even $d$ and $1$ for odd $d$. If $2\alpha + \beta = 0$, then $E \subseteq M$, and the weights give case~(a), or, for odd $d$ and $\wt(M) = d + 1$, case~(c). If $2\alpha + \beta = 1$, then $d$ is odd, and $E$ and $E'$ share one qubit with different Paulis, on which $M$ differs from $E$, and the weights give $\wt(M) = d$, which is case~(b). In every case $\supp(E) \subseteq \supp(M)$.
\par For (ii), let $E'$ have syndrome $s$, a competing class and $\wt(E') \le w$. By the distance argument, $M = E \cdot E'$ is a non-trivial logical operator, and $d \le \wt(M) \le 2w \le d + 1$, so $M \in \mathcal{C}_E$. The converse is immediate. Part (iii) follows from (ii) by keeping the lightest $E \cdot M$. For (iv), if $\mathcal{C}_E$ is empty, then by (ii) every competing error configuration at $s$ has weight above $w \ge \omega_{\lambda_0}(s)$, so $\gap(E) > 0$ and $f_\theta(E) = 0$.
\end{proof}

\begin{proof}[Proof of Lemma~\ref{lem:true-class}]
The error configurations with syndrome $s$ and class $\lambda_0$ are the $E \cdot g$ for $g \in S$. If $\wt(E \cdot g) \le w$, Equation~\eqref{eq:product-weight} gives $\wt(g) \le \wt(E) + \wt(E \cdot g) \le 2w \le d + 1$, and $\omega_{\lambda_0}(s) \le w$, since $E$ itself has weight $w$.
\end{proof}

\begin{proof}[Proof of Lemma~\ref{lem:enclosed-classes}]
For (i), each $M \in \mathcal{L}_E$ has $\wt(E \cdot M) = \lfloor d/2 \rfloor$, the smallest weight among the cases of Lemma~\ref{lem:competitors}~(i), so Lemma~\ref{lem:competitors}~(iii) gives $\omega_{\neg\lambda_0}(s)$ and $\Lambda'_E$. Any error configuration $F$ with syndrome $s$ and class $\lambda_0$ lies in a different class from $E \cdot M$, so the distance argument gives $\wt(F) \ge d - \lfloor d/2 \rfloor = w$, i.e., $\omega_{\lambda_0}(s) = w$.
\par For (ii), let $d$ be odd and $M, M' \in \mathcal{L}_E$. Up to a sign, $M \cdot M' = (E \cdot M) \cdot (E \cdot M')$ has trivial syndrome and weight at most $d - 1$, so by the distance argument its class is trivial, and $\mathsf{cls}(M) = \mathsf{cls}(M')$. Thus, $\Lambda_E$ is the single class $\lambda_0 \cdot \mathsf{cls}(M)$, $\gap(E) = -1$ by (i), and Corollary~\ref{cor:gap-fraction} gives $f_\theta(E) = 1$ exactly when $M$ flips $\theta$.
\end{proof}

\begin{proof}[Proof of Lemma~\ref{lem:mobius-count}]
Let $J = \bigcap_{M \in \mathcal{L}'} M$ be in $\mathcal{J}$. Every error configuration $E \subseteq J$ of weight $w$ is \emph{enclosed}, and an \emph{enclosed} $E$ has $E \subseteq J$ exactly when $J_E \subseteq J$, since $E \subseteq J_E$, and $E \subseteq J$ gives $\mathcal{L}' \subseteq \mathcal{L}_E$. Equation~\eqref{eq:mobius} with $K = J_E$ then gives $[J_E = J] = \sum_{J' \in \mathcal{J},\ E \subseteq J' \subseteq J} \mu(J', J)$ for $E \subseteq J$, so,
\begin{align}
\sum_{E \in \{E:\ E\text{ is \emph{enclosed}}\}} \varphi(J_E, E) &= \sum_{J \in \mathcal{J}} \; \sum_{\substack{E \subseteq J \\ \wt(E) = w}} [J_E = J]\, \varphi(J, E) \nonumber \\
&= \sum_{J \in \mathcal{J}} \; \sum_{\substack{J' \in \mathcal{J} \\ J' \subseteq J}} \mu(J', J) \left( \sum_{\substack{E \subseteq J' \\ \wt(E) = w}} \varphi(J, E) \right), \nonumber
\end{align}
where the first step counts each \emph{enclosed} $E$ at its closure, and the second substitutes $[J_E = J]$ and exchanges the sums over $E$ and $J'$.
\end{proof}

\section{Proofs for Concatenated Codes}\label{app:concatenated}
\begin{proof}[Proof of Lemma~\ref{lem:parent-soft}]
Dividing Equation~\eqref{eq:parent-likelihood} for $P$ by the same equation for $P = I$, and both by $\prod_j W^{(j)}_{\lambda_j}(s^{(j)})$, turns each child block $j \in \supp(M)$ into the ratio $W^{(j)}_{\lambda_j \cdot \mathsf{cls}(\bar{M_j})}(s^{(j)}) / W^{(j)}_{\lambda_j}(s^{(j)})$, and every other child block into $1$. By Equation~\eqref{eq:soft-ratio}, the left side has leading term $\rho_P(E)\, p^{\gap_P(E)}$, and each ratio has leading term $\rho_{M_j}(E^{(j)})\, p^{\gap_{M_j}(E^{(j)})}$. Lemma~\ref{lem:leading-terms} then gives Equation~\eqref{eq:parent-soft}, whose right side depends on $E$ only through $\sigma(E^{(1)}), \ldots, \sigma(E^{(n)})$.
\end{proof}

\begin{proof}[Proof of Theorem~\ref{thm:recursion}]
Since the child blocks act on disjoint qubits, $\Pr(E) = \prod_j \Pr(E^{(j)})$, and by Lemma~\ref{lem:parent-soft}, $\sigma(E) = \Phi(\sigma(E^{(1)}), \ldots, \sigma(E^{(n)}))$. Grouping each $E^{(j)}$ by its soft information $\tau_j$ therefore factors the sum of $\Pr(E)$ over $\sigma(E) = \tau$ into the first part of Equation~\eqref{eq:recursion}. Replacing each probability by its leading term from Equation~\eqref{eq:soft-leading} gives the second part, by Lemma~\ref{lem:leading-terms}~(i), and $\Phi$ depends only on the code of the parent block.
\end{proof}

\section{Proofs for the Predictor}\label{app:predictor}
\begin{proof}[Proof of Lemma~\ref{lem:zero-difference}]
Let $c = |\supp(g) \setminus \supp(M)|$. Equation~\eqref{eq:product-weight} gives $\wt(M \cdot g) = d + c - \alpha(M, g)$, and $M \cdot g$ is a non-trivial logical operator, so $c \ge \alpha(M, g)$ by the distance argument. Counting qubit by qubit, with $\supp(F) \subseteq \supp(M)$,
\begin{equation*}
\wt(F \cdot g) - \wt(F) = c + \left| \left( \supp(g) \cap \supp(M) \right) \setminus \supp(F) \right| - \alpha(F, g) \ge c - \alpha(M, g) \ge 0,
\end{equation*}
since $\alpha(F, g) \le \alpha(M, g)$, as $F$ acts as $M$ on $\supp(F)$. Equality holds exactly when $\supp(g) \cap \supp(M) \subseteq \supp(F)$, which also gives $\alpha(F, g) = \alpha(M, g)$, and $c = \alpha(M, g)$, i.e., $\wt(M \cdot g) = d$.
\end{proof}

\paragraph{Weight of the stabilizers in Equation~\eqref{eq:true-coefficient}.}
By Lemma~\ref{lem:zero-difference}, every stabilizer $g$ of Equation~\eqref{eq:true-coefficient} has $\wt(M \cdot g) = d$ and $\supp(g) \cap \supp(M) \subseteq \supp(\tilde E)$ for every $M \in \mathcal{L}_J$. Equation~\eqref{eq:product-weight} with $\wt(M \cdot g) = \wt(M)$ then gives $\wt(g) = 2\alpha(M, g) + \beta(M, g) \le 2\,|\supp(g) \cap \supp(M)| \le 2\wt(\tilde E) \le 2d$.

\paragraph{Runtime cost.}
The construction of $\mathcal{J}$ and the enumeration of stabilizers happen once per code, within their budgets. At each level, Algorithm~\ref{alg:dp} runs once for each pair $J' \subseteq J$ in $\mathcal{J}$. Since \texttt{state.numTouched} and \texttt{state.numNotLightest} are at most $d$, a call keeps at most $(d + 1)^2 (2B_{\text{gap}} + 1) B_{\text{cellratio}}\times F$ states, where $F$ is the number of values of \texttt{state.fullyTouched}, which is bounded by a fixed budget on the unions of overlaps. In each of at most $d$ sets of \emph{identical} child blocks, each state has at most $\binom{d + |\mathcal{D}_P| - 1}{d}$ splits, where $|\mathcal{D}_P| \le 2B_{\text{gap}} B_{\text{likelihood}} + 1$, and each split takes $O(d)$ time. A call to Algorithm~\ref{alg:dp} therefore takes time,
\begin{equation}\label{eq:call-time}
O\left( d^2 (d + 1)^2 (2B_{\text{gap}} + 1) B_{\text{cellratio}} \binom{d + |\mathcal{D}_P| - 1}{d} \right),
\end{equation}
which is polynomial in $d$ for fixed budgets, since the binomial coefficient has degree $|\mathcal{D}_P| - 1$ in $d$. \texttt{AddListed} adds $O(|\mathcal{L}|\, d^{B_{\text{configweight}}})$ error configurations per level, each with at most $B_{\text{truthconv}}$ terms of Equation~\eqref{eq:true-coefficient}. None of these bounds depends on $\ell$, so the runtime is linear in $\ell$.

\bibliographystyle{alpha}
\bibliography{references}

\end{document}